\documentclass[a4paper,12pt]{article}

\usepackage{amsmath,amsthm,amssymb,latexsym,stmaryrd}
\usepackage{tikz}
\usepackage{tikz-cd}
\usepackage[utf8]{inputenc}

\title{ \textbf{The $\infty$-groupoid generated by an arbitrary topological $\lambda$-model}}
\author{
	Daniel O.\ Martínez-Rivillas, Ruy J.G.B.\  de Queiroz
}
\date{\small{\today}}

\begin{document}
\maketitle

\begin{abstract}
The lambda calculus is a universal programming language. It can represent the computable functions, and such offers a formal counterpart to the point of view of functions as rules. Terms represent functions and this allows for the application of a term/function to any other term/function, including itself. The calculus can be seen as a formal theory with certain pre-established axioms and inference rules, which can be interpreted by models. Dana Scott proposed the first non-trivial model of the extensional lambda calculus, known as $ D_\infty$, to represent the $\lambda$-terms as the typical functions of set theory, where it is not allowed to apply a function to itself. Here we propose a construction of an $\infty$-groupoid from any lambda model endowed with a topology. We apply this construction for the particular case $D_\infty$, and we see that the Scott topology does not provide enough information about the relationship between higher homotopies. This motivates a new line of research focused on the exploration of $\lambda$-models with the structure of a non-trivial $\infty$-groupoid to generalize the proofs of term conversion (e.g., $\beta$-equality, $\eta$-equality) to higher-proofs in $\lambda$-calculus.

\medskip\noindent\textit{Keywords:} Lambda calculus, Lambda model, Infinity groupoid, Homotopy, Scott topology.
\end{abstract}


\newtheorem{defin}{Definition}[section]
\newtheorem{teor}{Theorem}[section]
\newtheorem{corol}{Corollary}[section]
\newtheorem{prop}{Proposition}[section]
\newtheorem{rem}{Remark}[section]
\newtheorem{lem}{Lemma}[section]
\newtheorem{nota}{Notation}[section]
\newtheorem{ejem}{Example}[section]

\section{Introduction}

The lambda calculus is a programming language in which functions are seen as rules instead of sets. Since the origins of Computer Science, lambda calculus has been widely used as a formal counterpart to the notion of algorithm. Since it constitutes the essence of functional programming languages, for example, the ML family, such as CALM and SML, it is also of great interest the study of lambda calculus from the point of view of types, not least because the discipline of typing allows the detection of errors without the need to execute a given program. There are extensions of typed lambda calculus such as Martin-L\"of's Type Theory (MLTT), also known as \textit{Intuitionistic Type Theory}, where unlike lambda calculus allows for dependent types such as \textit{the identity type} $I_A(a,b)$, with $A$ being a type, and $a$ and $b$ being terms of type $A$.  

 \medskip Under the so-called Curry--Howard isomorphism, the type $I_A(a,b)$ corresponds to the proposition which says that $a$ is equal to $b$ in the type $A$, and its terms (if these exist) would be proofs of this equality. If there is a proof $p$ of $I_A(a,b)$, this does not imply that $a\equiv b$,  (extensionally equal, i.e., the reflexivity term $ref(a)$ is an inhabitant of $I_A(a,b)$), since it can happen that $a$ and $b$ are extensionally equal, but not necessarily intensionally equal. In the other direction, if $a \equiv b$ then this implies that there is a proof of equality of $I_A(a,b)$, e.g.,  the proofs $ref(a)$ and $ref(b)$. Thus, $I_A(a,b)$ is a weaker type of equality than extensional equality, but it can gather more information regarding the multiple proofs of equality. The type $I_A(a,b)$ is known as a \textit{propositional equality}. 

\medskip Furthermore, given two proofs of equality $p$ and $q$ in $I_A(a,b)$, we can consider the type $I_{I_A(a,b)}(p,q)$ as the type of terms representing proofs that $p$ is equal to $q$, and thus continue iterating indefinitely to obtain an infinite sequence of higher identity types, which carries an algebraic structure known as $\infty$-groupoid. 

\medskip In Homotopy Type Theory (HoTT), the types of MLTT  are interpreted as topological spaces, and proofs of identity $p$ of $I_A(a,b)$ are seen as continuous paths from $a$ to $b$. The proofs $h$ of identity proofs in $I_{I_A(a,b)}(p,q)$  are interpreted as  homotopies $h$ from $p$ to $q$, and so on, the fundamental $\infty$-groupoid $\Pi_\infty A(a,b)$ is obtained this way.

\medskip Since MLTT is a formalization of typed lambda calculus, one is allowed to suspect that it can also carry an algebraic $\infty$-groupoid structure. If the types are interpreted, not as sets, but as topological spaces, one has a rich mathematical structure to model complex phenomena such as computations. Our motivation is to study type-free lambda calculus from a model that allows us to see the $\infty$-groupoid structure.

 \medskip The system of MLTT with identity types (Martin-Löf, 1975), was developed originally to give a formalization of constructive mathematics in which there would be formal counterparts to proofs of identity statements. By studying the relationship between two proofs of a proposition, the relation of two higher-proofs of a proposition on proofs, and so on, one could have the formal counterpart to such a hierarchical structure of globular sets. Later (Hofmann; Streicher, 1994) comes up with the idea of using higher-order categories for the interpretation of MLTT, and later on (Awodey; Warren, 2009) manages to establish the connection between MLTT and algebraic topology, in the sense of which types of identity can be interpreted as an equivalence of homotopies. This led (Voevodsky, 2010) to formulate the Univalence Axiom, which gives rise to Homotopy Type Theory (HoTT) (Program, 2013), where MLTT has a structure of $\infty$-groupoid (Berg; Garner, 2011). Also, Voevodsky proved that HoTT has a model in the category of Kan complexes ($\infty$-groupoids) (Kapulkin; Lumsdaine; Voevodsky, 2012) for Univalence Axiom and (Lumsdaine; Shulman, 2020) for higher inductive types. 

\medskip Since MLTT is an extension of simply-typed $\lambda$-calculus, it can also be seen, as an $\infty $-groupoid in this topological interpretation given by (Scott, 1993). Also, for type-free $\lambda$-calculus, Dana Scott presented the model $D_\infty$ to gives an interpretation to $\lambda$-terms into the theory of ordered sets. This model is semantically rich in the sense that it is an ordered set with a topology. It allows for generating a $\lambda$-model $\mathfrak{D}_\infty$,  with a structure of an $\infty$-groupoid, and an operation of composition between cells. 

\medskip In this work, we build an $\infty$-groupoid $\mathfrak{D}$ from a topological space $D$ through higher groups of homotopy (Greenberg, 1967; Hatcher, 2001).  We show how to calculate all higher groups generated by any c.p.o.,  with the Scott topology (Acosta; Rubio, 2002).  We apply the construction of Section 3 for the particular case of the c.p.o.\ $D_\infty$ to obtain an $\infty$-groupoid $\mathfrak{D}_\infty$, and prove that this is isomorphic to $D_\infty$, which shows that $\mathfrak{D}_\infty$  is indeed an extensional $\lambda$-model. Unfortunately, the $\infty$-groupoid associated with $D_\infty$ and $\mathfrak{D}_\infty$ turns out to be trivial. So in Section 5, we explain with geometric intuition the purpose of the search for higher non-trivial $\lambda$-models, which we called \textit{homotopic $\lambda$-models}, and which are studied in more detail by (Martínez; de Queiroz, 2020) in light of simplicial sets and Kan complexes (Goerss; Jardine, 2009).

\section{Preliminaries}

In this section we present some basic notions about $\infty$-groupoids, topological spaces, continuity, higher fundamental groups and extensional lambda models, to set up the groundwork for this paper. All the proofs of results can be found in the suggested references.

\subsection{Identity types in HoTT}

The Homotopy Types Theory (HoTT) corresponds to the axioms and rules of the intensional version of Intuitionistic Type Theory (ITT) plus the univalence axiom and higher inductive types. It was created to give a new foundation of mathematics and facilitate the translation of mathematical proofs into computer programs. In this way, it allows computers to verify mathematical proofs with high deductive complexity. 

\medskip HoTT facilitates the understanding of ITT by allowing for an interpretation based on the geometric intuition of Homotopy Theory. For example, a type $ A $ is interpreted as the topological space, a term $ a: A $ as the point $ a \in A $, a dependent type $ x: A \vdash B (x) $ as the fibration $ B \rightarrow A $, the identity type $I_A$ as the space path $A^I$,  a term $p:I_A(a,b)$    as the path $p: a\rightarrow b$,  the term $\alpha:I_{I_A(a,b)}(p,q)$ as the homotopy $\alpha:p\Rightarrow q$ and so on. 

\medskip Among the dependent types arise the identity types, which were inductively defined by Martin-Löf analogously to the inductive definition of natural numbers, according to the axioms.

\begin{enumerate}
	\item[A1.] If $A$ is a type, and $a$ and $b$ are terms that inhabit it, writing $a,b:A$, there is an identity type denoted by $I_A(a,b)$ (or $a=_Ab$),
	\item [A2.] if $A$ is a type and $a:A$, there is a term $ref(a):a=_Aa$ (reflexivity),
	\item[A3.] if $A$ is a type, $a:A$ and $P(b,e)$ is a family of types depending on parameters $b:A$ and $e:I_A(a,b)$. In order to define any term $f(b,e):P(b,e)$,  it suffices to provide a  term $p:P(a,ref(a))$. The resulting term $f$ may be regarded as having been completely
	defined by the single definition  $f(a,ref(a)):=p$.
\end{enumerate}

Here the axiom A3 is analogous to induction axiom of natural numbers and by its way the properties 
\begin{enumerate}
	\item given the type $a=_Ab$, there is the no-void type $b=_Aa$ (symmetry),
	\item let the types $a=_Ab$ and $b=_Ac$, there is the no-void type $a=_Ac$ (transitivity),  
\end{enumerate}
can be proved by induction, see (Martin-Löf; 1973) and (Grayson; 2018).

\medskip In HoTT,  the property 1 is proved simply by inverting any path $p:a=_Ab$, i.e., $p^{-1}:b=_Aa$. The property 2 can be proved by concatenating any paths $p:a=_Ab$ and $q:b=_Ac$, this is, $p\ast q:a=_Ac$. The term $ref(a)$ from axiom A2, is interpreted as the constant path on the point $a$, denoted by $c(a)$ or also writing as $1_a$. The axiom A3 can be seen as, given a path  $e:a=_Ab$  and a proof of any property $P(a)$, this proof can be transported by way of the path $e$ to give a proof of the property $P(b)$. For more information on HoTT see (Program, 2013). 

\subsection{Strict $\infty$-groupoids}

In the literature we can find two types of $\infty$-groupoids: strict and  weak $\infty$-groupoids (Leinster, 2003). For this paper we shall only work with the former. It is well known that every strict $\infty$-groupoid is weak. It is usual to call a  weak $\infty$-groupoid just $\infty$-groupoid.

\begin{defin}[\textbf{$\infty$-globular set}]
An $\infty$-globular set $D$ is a diagram
	$$\cdots\rightrightarrows^s_t D_n\rightrightarrows^s_t D_{n-1}\rightrightarrows^s_t\cdots\rightrightarrows^s_t D_1\rightrightarrows^s_t D_0,$$
	of sets and functions such that
	$$s(s(d))=s(t(d)), \,\,\,\,\, t(s(d))=t(t(d)),$$
	for all $n\geq 2$ and $d\in D_n$. 
\end{defin}

\begin{defin}
 Let $D$ be a globular set and $n\in\mathbb{N}$. For each $0\leq p<n$ define the relation into $D_n\times D_n$ as the set
 $$D_n\times_{D_p} D_n=\{(d',d)\in D_n\times D_n:t^{n-p}(d)=s^{n-p}(d')\}.$$
\end{defin}

\begin{defin}[\textbf{strict $\infty$-groupoid}]
	Let $n$ be a natural number such that $n>0$. A strict $\infty$-groupoid is an $\infty$-globular set $D$ equipped with
	\begin{itemize}
		\item a function $\circ_p:D_n\times_{D_p}D_n\rightarrow D_n$ for each $0\leq p<n$, where $\circ_p(d',d):=d'\circ d$ and call it a composite of $d'$ and $d$,
		
		\item a function $i:D_n\rightarrow D_{n+1}$ for each $n\geq 0$, where $i(d):=1_d$ and call it the identity on $d$,
	\end{itemize} 
satisfying the following axioms:

\begin{enumerate}
	\item[a.]$($sources and targets of composites$)$ if $0\leq p<n$ and $(d',d)\in D_n\times_{D_p} D_n$ then
	\begin{align*}
	& s(d'\circ_p d)=s(d) \hspace{1.7cm} and\hspace{0.5cm}t(d'\circ_p d)=t(d')\hspace{1.4cm} if \,\, p=n-1, \\
	& s(d'\circ_p d)=s(d')\circ_p s(d)\hspace{0.5cm} and\hspace{0.5cm}t(d'\circ d)=t(d')\circ_p t(d)\hspace{0.5cm} if \,\, p\leq n-2,
	\end{align*}
	
	\item[b.] $($sources and targets of identities$)$ if $0\leq p<n$ and $d\in D_n$ then $s(1_d)=d=t(1_d)$,
	
	\item[c.]$($associativity$)$ if $0\leq p<n$ and $d,d',d''\in D_n$ with $(d'',d'),(d',d)\in D_n\times_{D_p} D_n$  then
	$$(d''\circ_p d')\circ_p d=d''\circ_p(d'\circ_p d),$$ 
	
	\item[d.]$($identities$)$ if $0\leq p<n$ and $d\in D_n$ then 
	 $$i^{n-p}(t^{n-p}(d))\circ_p d=d=d\circ_p i^{n-p}(s^{n-p}(d)),$$ 
	 
	 \item[e.] $($binary interchange$)$ if $0\leq q<p<n$ and $d,d',e,e'\in D_n\times_{D_q} D_n$ with 
	 $$(e',e),(d',d)\in D_n\times_{D_p} D_n,\hspace{0.3cm}(e',d'),(e,d)\in D_n\times_{D_q} D_n,$$
	 then
	 $$(e'\circ_p e)\circ_q (d'\circ_p d)=(e'\circ_q d')\circ_p(e\circ_q d),$$
	 
	 \item[f.] $($nullary interchange$)$ if $0\leq q<p<n$ and $d,d'\in D_p\times_{D_q} D_p,$ then
	 $1_{d'}\circ_q1_d=1_{d'\circ_q d}$.  
	 
	 \item[g.] $($inverse$)$ if $0\leq p<n$ and $d\in D_n$ then exist $\bar{d}\in D_n$ with $s(\bar{d})=t(d)$, $t(\bar{d})=s(d)$ such that
	 $$\bar{d}\circ_p d=i^{n-p}(s^{n-p}(d)),\hspace{0.3cm} d\circ_p\bar{d}=i^{n-p}(t^{n-p}(d)). $$
\end{enumerate}
\end{defin}

If $d\in D_n$, we say that $d$ is an $n$-cell or an $n$-isomorphism from some $a\in D_{n-1}$ to some $b\in D_{n-1} $. In this case $a=s(d)$ and $b=t(d)$. Or, in other words, we say that $a$ and $b$ are $n$-equivalent if there is an $n$-isomorphism between $a$ and $b$.

\medskip For example, in the fundamental $\infty$-groupoid $\Pi_\infty (D)$, where $D$ is a topological space and $D_n=\Pi_{n}(D)$, the $n$-isomorphisms are the $n$-paths class $[p]$ in $\Pi_n (D)$. Even though the $n$-path $p:a\rightsquigarrow b$  may not satisfy the properties of a strict $\infty$-groupoid, the $n$-paths class $[p]$ does satisfy them. Thus $\Pi_\infty (D)$ is not a strict $\infty$-groupoid, but it is a weak $\infty$-groupoid. This subject will be addressed in more details in Section 4. 

\begin{nota}
	Write $(a\simeq_n b)$ for the set of all $n$-isomorphisms between $a$ and $b$. Note that this set can be empty.
\end{nota}

Since $\simeq_n$ is an equivalence relation and the set $(a\simeq_n b)$ can have cardinality greater than one, we can see $\simeq_n$ as an intensional equality $=_n$, i.e., the equivalence $a\simeq_n b$ can be seen as the intensional equality $a=_{h}b$, which motivates a more precise definition of intensional equality between $n$-cells.  

\begin{defin}[Extensional and intensional equality]
	Two $n$-morphisms $a$ and $b$ are intentionally equal, $a=_{h}b$, if there is $d:(a\simeq_n b)$. Two $n$-morphisms $a$ and $b$ are extensionally equal, $a=b$, if the identity $n$-morphism  $1_a:(a\simeq_n b)$.    
\end{defin}

Note that if $a=b$ then $a=_{h}b$, since $1_{a}:(a\simeq_n b)$. The converse does not always hold. For example, for the fundamental groupoid $\Pi_1\{0,1\}$, where $\{0,1\}$ has the topology $\{\{0,1\},\emptyset\}$, we have that $0=_{h}1$, but $0\neq 1$. 

\subsection{Higher groups of homotopy}

Next we define closed $n$-paths on $d_0\in D$, with the purpose of building the fundamental $n$-group $\pi_n (D,d_0)$. For more details, see (Greenberg, 1967), (Hatcher, 2001) and (May, 1999).

\begin{defin}[Border of a set]
	Let $D$ be a topological space. Define the border of a set $X\subseteq D$, denoted by $\partial X$, as the set of points $x\in D$ such that for each neighborhood $V$ of $x$, $V\cap X\neq\emptyset$ and $V\cap (D-X)\neq\emptyset$. 
\end{defin}

\begin{defin}[Closed $n$-path]\label{definition-path-closed}
	Let $D$ be a topological space. A closed $n$-path on $d_0\in D$ is a continuous map $\sigma:[0,1]^n\rightarrow D$ which sends border of $[0,1]^n$ to $d_0$.
	
	\medskip It defines the product of closed $n$-paths, $\alpha\ast\beta=\gamma$, as the closed $n$-path
		\begin{equation*}
	\gamma(t_1,\ldots,t_n)=
	\begin{cases}
	\alpha(2t_1,t_2\ldots,t_n) & \text{if \, $0\leq t_1\leq \frac{1}{2}$,}\\
	\beta(2t_1-1,t_2\ldots,t_n) & \text{if \, $\frac{1}{2}\leq t_1\leq 1$.}
	\end{cases}
	\end{equation*}
\end{defin}

This product is known in the literature as concatenation operator between the $n$-paths $\alpha$ and $\beta$.

\begin{defin}[Homotopic $n$-paths]
	Two $n$-paths $\alpha$, $\beta$ are homotopic in $d_0$, $\alpha\simeq\beta$, if there exists a continuous map $H:[0,1]\times[0,1]^n\rightarrow D$, such that 
	\begin{enumerate}
		\item[a.] $H(0;t_1,\ldots,t_n)=\alpha(t_1,\ldots,t_n)$, for $(t_1,\ldots,t_n)\in[0,1]^n$,
		\item[b.] $H(1;t_1,\ldots,t_n)=\beta(t_1,\ldots,t_n)$, for $(t_1,\ldots,t_n)\in[0,1]^n$,
		\item[c.] $H(s;t_1,\ldots,t_n)=d_0$, for each $s\in[0,1]$ and each $(t_1,\ldots,t_n)\in\partial([0,1]^n)$.
	\end{enumerate} 
\end{defin}

This homotopy is an equivalence relation, and one writes $[\sigma]$ for the class of homotopic $n$-paths to the $n$-path $\sigma$. The set of all homotopy classes is denoted by $\pi_n(D,d_0)$, where the product between classes is defined in the natural way $[\alpha]\ast[\beta]:=[\alpha\ast\beta]$.

\medskip While in Definition~\ref{definition-path-closed} of closed $n$-path $p$ based in a point $a\in D$, the image of the $[0,1]^n$ border fell on $a$, $p(\partial([0,1]^n))=\{a\}$, for $n$-paths the image of the lower border falls at some point $a\in D$, i.e., $p(t_1,\ldots,t_{n-1},0)=a$, and the upper image falls at some point $b\in D$, i.e., $p(t_1,\ldots,t_{n-1},1)=b$.

\begin{teor}
	$\pi_n(D,d_0)$ is a group, called the fundamental group of $D$ on $d_0$ of dimension $n$.
\end{teor}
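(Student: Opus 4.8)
The plan is to verify the group axioms directly on the homotopy classes $[\sigma]$, constructing each required homotopy by hand. The key simplifying observation is that both the concatenation operator of Definition~\ref{definition-path-closed} and the boundary condition $\sigma(\partial([0,1]^n))=\{d_0\}$ act nontrivially only through the first coordinate $t_1$, while the remaining coordinates $t_2,\ldots,t_n$ are carried along unchanged. Hence every construction reduces to the classical one-dimensional argument applied in the variable $t_1$, with $t_2,\ldots,t_n$ held as parameters. In each case continuity of a piecewise-defined map follows from the pasting lemma, since the pieces agree (with common value $d_0$) on the separating hyperplane $t_1=\tfrac12$, and one checks that the boundary of $[0,1]^n$ is still sent to $d_0$.

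First I would show that the product $[\alpha]\ast[\beta]:=[\alpha\ast\beta]$ is well defined. Given homotopies $H:\alpha\simeq\alpha'$ and $K:\beta\simeq\beta'$ rel boundary, I define a homotopy $\alpha\ast\beta\simeq\alpha'\ast\beta'$ by running $H$ (with $t_1$ rescaled by $t_1\mapsto 2t_1$) on the region $0\le t_1\le\tfrac12$ and $K$ (with $t_1\mapsto 2t_1-1$) on $\tfrac12\le t_1\le 1$, for each fixed homotopy parameter $s$. The two pieces take the value $d_0$ along $t_1=\tfrac12$, so the pasting lemma yields continuity, and condition (c) of the homotopy definition is inherited from $H$ and $K$.

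Next I would dispatch associativity, the identity, and inverses. For associativity, $(\alpha\ast\beta)\ast\gamma$ and $\alpha\ast(\beta\ast\gamma)$ differ only by a reparametrization of $t_1$, so the straight-line homotopy between the two piecewise-linear reparametrizations of $[0,1]$, applied in the first coordinate, gives the desired homotopy. For the identity I would take the class of the constant $n$-path $c(d_0)$ and contract the constant segment of $c(d_0)\ast\alpha$ (respectively $\alpha\ast c(d_0)$) back onto $\alpha$ by a reparametrization homotopy. For inverses I set $\bar\alpha(t_1,t_2,\ldots,t_n):=\alpha(1-t_1,t_2,\ldots,t_n)$ and use the standard fold homotopy, which progressively shortens how far $\alpha$ is traversed before it is retraced, collapsing $\alpha\ast\bar\alpha$ to $c(d_0)$; the symmetric construction gives $\bar\alpha\ast\alpha\simeq c(d_0)$.

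The main obstacle is bookkeeping rather than conceptual: at each stage one must confirm that the explicit homotopy satisfies condition (c) uniformly in the homotopy parameter $s$, i.e.\ that $\partial([0,1]^n)$ stays fixed at $d_0$ throughout the reparametrization of $t_1$. Here the reduction to the one-dimensional theory pays off, since any point with some $t_i\in\{0,1\}$ for $i\ge 2$ already lies on the boundary and is sent to $d_0$ regardless of what happens to $t_1$; the verification therefore collapses to the two faces $t_1=0$ and $t_1=1$, exactly as for $\pi_1$. Once these checks are complete, $\pi_n(D,d_0)$ carries a well-defined, associative product with a two-sided identity $[c(d_0)]$ and two-sided inverses, and is thus a group.
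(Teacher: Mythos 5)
Your proposal is correct and follows the standard textbook argument: reduce everything to the classical $\pi_1$ constructions applied in the first coordinate $t_1$, with $t_2,\ldots,t_n$ as parameters, checking continuity by the pasting lemma and the boundary condition on the faces $t_1=0$ and $t_1=1$. The paper itself gives no proof of this theorem, deferring to the cited references (Greenberg, Hatcher, May), where exactly this argument appears, so there is nothing to contrast; the only caveat is that your argument, like the classical one, applies for $n\geq 1$, whereas the paper's conventions elsewhere also allow $n=0$, for which $\pi_0$ is not a group.
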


The identity element of the group $\pi_n(D,d_0)$ is the homotopy class of the constant $n$-path $c^n(d_0):[0,1]^n\rightarrow D$, i.e., $c^n(d_0)(t_1,\dots,t_n)=d_0$ for all $t_1,\ldots,t_n$ in $[0,1]$.

\begin{defin}[Homotopic functions]
	Let $D$ be a topological space. Two continuous functions $f,g:D\rightarrow D$ are homotopic, $f\simeq g$, if there exists a continuous map $H:D\times [0,1]\rightarrow D$ such that $H(d,0)=f(d)$ and $H(d,1)=g(d)$ for all $d\in D$. $H$ is called a homotopy between continuous functions.
 \end{defin}

\begin{defin}[Contractible space]
	A topological space $D$ is said to be contractible if there exists a constant function $f_{c}:D\rightarrow D$, $f_{c}(d)=c$ for each $d\in D$, homotopic to the identity function $I_D:D\rightarrow D$. The homotopy $H:f_c\simeq I_d$ is called a contraction.
\end{defin}

\begin{teor}\label{trivialgroupcontrac-theorem}
	If $D$ is contractible, then $\pi_n(D,d_0)=\{[c^n(d_0)]\}$ for each $n\geq 0$.
\end{teor}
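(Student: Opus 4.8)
The plan is to use the contraction $H\colon D\times[0,1]\to D$, with $H(d,0)=c$ and $H(d,1)=d$, to deform every closed $n$-path based at $d_0$ onto the constant $n$-path, the one delicate point being that $H$ need not keep $d_0$ fixed. For $n=0$ the statement is just path-connectedness: every $d\in D$ is joined to $c$ by the path $s\mapsto H(d,s)$, so $D$ is path-connected and $\pi_0(D,d_0)$ reduces to the single class $[c^0(d_0)]$.

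For $n\geq 1$, given a closed $n$-path $\sigma$ with $\sigma(\partial([0,1]^n))=\{d_0\}$, I would first build the continuous map
$$G(s;t_1,\dots,t_n)=H\big(\sigma(t_1,\dots,t_n),\,1-s\big),$$
which satisfies $G(0;\cdot)=\sigma$ and $G(1;\cdot)=c^n(c)$, the constant $n$-path at $c$. On the border of $[0,1]^n$ it restricts to $\gamma(s):=H(d_0,1-s)$, a path running from $d_0$ to $c$. Thus $G$ is a homotopy carrying $\sigma$ to a constant path, but it is \emph{not} admissible in the sense of the definition of homotopic $n$-paths, because on the border it follows $\gamma$ instead of remaining fixed at $d_0$; that is, condition (c) fails.

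The crux is to upgrade this free null-homotopy to an admissible, based one. The standard device is the change-of-basepoint isomorphism $\gamma_{\#}\colon \pi_n(D,c)\to\pi_n(D,d_0)$ induced by the path $\gamma$: the homotopy $G$, read along $\gamma$, realizes exactly the relation $[\sigma]=\gamma_{\#}\big([c^n(c)]\big)$ in $\pi_n(D,d_0)$. Since $\gamma_{\#}$ is a group isomorphism it carries the identity element $[c^n(c)]$ of $\pi_n(D,c)$ to the identity element $[c^n(d_0)]$ of $\pi_n(D,d_0)$, whence $[\sigma]=[c^n(d_0)]$ and the group is trivial. I expect the main obstacle to be precisely the construction and well-definedness of $\gamma_{\#}$, together with the verification that $G$ realizes $[\sigma]=\gamma_{\#}([c^n(c)])$: concretely one must fill the $(n+1)$-cube $[0,1]\times[0,1]^n$ so as to push the cap values, which lie along $\gamma$, back to $d_0$ for \emph{every} intermediate $s$, and this is exactly where the basepoint has to be handled with care, since the naive map $G$ alone violates condition (c). Once this based null-homotopy is produced, every class of $\pi_n(D,d_0)$ equals $[c^n(d_0)]$, completing the proof for all $n\geq 0$.
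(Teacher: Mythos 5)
Your argument is correct and is essentially the standard proof of this fact: the paper itself does not prove the theorem but defers to its references (Greenberg, Hatcher, May), where the argument is exactly the one you give --- a free null-homotopy $G(s;\cdot)=H(\sigma(\cdot),1-s)$ obtained from the contraction, whose boundary values trace the path $\gamma(s)=H(d_0,1-s)$, upgraded to a based null-homotopy via the change-of-basepoint isomorphism $\gamma_{\#}$. You correctly isolate the one genuine subtlety (the contraction need not fix $d_0$) and handle it the same way the cited sources do, so nothing is missing.
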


\begin{rem}
	Let $D$ contractible. For each $t_1,\dots,t_n\in[0,1]$ we have
$$c^n(d_0)(t_1,t_2,\ldots,t_n)=e_n(t_1)(t_2)\cdots(t_n)=d_0,$$ 
where $e_n:[0,1]\rightarrow \Pi_{n-1}(D,d_0)$ is the constant path in $e_{n-1}$ defined by recursion 
\begin{align*}
\hspace{1cm}& e_0=d_0, \\
& e_n=c_{e_{n-1}},
\end{align*}
then $\pi_n(D,d)\cong \{e_n\}$ $($group isomorphisms$)$. 
\end{rem}

\subsection{The $\boldsymbol{\lambda}$-model Scott's $\boldsymbol{D_\infty}$}

Next we present a brief introduction to the theories and extensional lambda models. For the proofs of the theorems, one is referred to (Hyndley; Seldin, 2008).

\begin{defin}[Directed set]
	Let $(D,\sqsubseteq)$ be a partial ordering. A non-empty subset $X\subset D$ is said to be directed if for all $a,b\in X$, there exists $c\in X$ such that $a\sqsubseteq c$ and $b\sqsubseteq c$.
\end{defin}

\begin{defin}[Complete partial orders, c.p.o.'s]
	A c.p.o.\ is a partially ordered set $(D,\sqsubseteq)$ such that
	\begin{enumerate}
		\item[a.] $D$ has a least element $($denoted $\perp$$)$,
		\item[b.] every directed subset $X\subset D$ has least upper bound, l.u.b., $($denoted $\bigsqcup X)$.
	\end{enumerate}
The pair $(D,\sqsubseteq)$ will be denoted $D$.
\end{defin}

\begin{defin}[The set $\mathbb{N}^+$]
	Choose any object $\perp\notin\mathbb{N}$, and define $\mathbb{N}^+=\mathbb{N}\cup\{\perp\}$. For all $a,b\in\mathbb{N}^+$, define
	$$a\sqsubseteq b\Longleftrightarrow (a=\perp\,\, and \,\, b\in\mathbb{N}) \,\, or \,\, a=b.$$
\end{defin}

Clearly $\mathbb{N}^+$ is a c.p.o., since every directed subset is finite so has l.u.b.\ and $\perp$ is the least element as seen in Figure \ref{fig03}. 

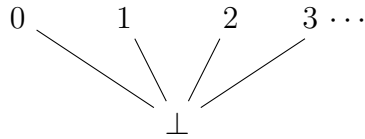
\begin{figure}[!htb]
	\centering
	\begin{tikzpicture}[scale=.7]
	\node (a) at (-3,0) {$0$};
	\node (b) at (-1,0) {$1$};
	\node (c) at (1,0) {$2$};
	\node (d) at (3,0) {$3\,\cdots$};
	\node (zero) at (0,-2) {$\bot$};
	\draw (zero) -- (a)  (b) -- (zero) -- (c)  (d) -- (zero);
	\end{tikzpicture}
	\caption{The c.p.o. $\mathbb{N}^+$}\label{fig03}
\end{figure}
Every c.p.o.\ has a topology called the Scott topology, which we define below.

\begin{defin}[Final and inaccessible set]
	Let $D$ a c.p.o. and $A\subseteq D$. The set $A$ is final if it satisfies
	$$a\in A \,\, and \,\, a\sqsubseteq b\Longrightarrow b\in A,$$
	and $A$ is inaccessible by directedness if for every directed subset $X$ of $D$,
	$$\bigsqcup X\in A\Longrightarrow X\cap A\neq \emptyset.$$
\end{defin}

\begin{defin}[Scott topology]
	Let $D$ be a c.p.o. The Scott topology is defined as follows 
	\begin{center}
		$\sigma=\{A\subseteq D: A$ is final and inaccessible by directedness$\}\cup\{\emptyset\}.$ 
	\end{center} 
\end{defin}

\begin{defin}[Continuous function]
	Let $D$ and $D'$ be c.p.o.'s. A function $f:D\rightarrow D'$ is continuous if for each directed subset $X\subset D$, $f(\bigsqcup X)=\bigsqcup f(X)$. 
\end{defin}

\begin{defin}
	For c.p.o.'s $D$ and $D'$, define $[D\rightarrow D']$ to be set of all continuous functions from $D$ to $D'$ in Scott's topology. For $\phi,\psi\in[D\rightarrow D']$, define
	$$\phi\sqsubseteq\psi\Longleftrightarrow (\forall d\in D)(\phi(d)\sqsubseteq'\psi(d)).$$
\end{defin}

\begin{rem}\label{functionsetcpo-lemma}
	There is a result which says that if $D$ and $D'$ are c.p.o.'s, then $[D\rightarrow D']$ is a c.p.o. (see proof in (Hindley; Seldin, 2008)). It would allow from a c.p.o.\ initially generate inductively an infinite sequence of c.p.o.'s as below in the Definition \ref{sequence-cpos-definition}. 
\end{rem}

\begin{defin}[Projections]\label{testenv-definition}
	Let $D$ and $D'$ be c.p.o.'s. A projection from $D'$ to $D$ is a pair $\left\langle \phi,\psi\right\rangle $ of functions, where $\phi\in[D\rightarrow D']$ and $\psi\in[D'\rightarrow D]$, such that 
	$$\psi\circ\phi=I_D, \hspace{0.3cm}\phi\circ\psi\sqsubseteq I_{D'}.$$
\end{defin}

\begin{defin}[Projection from $D_{n+1}$ to $D_{n}$]
	For every $n\geq 0$ define the projection $\langle\phi_n,\psi_n\rangle$ from $D_{n+1}$ to $D_n$ by the recursion
	\begin{align*}
	&\phi_0(d):=\boldsymbol{\lambda} a\in D_0.d, \hspace{3.5cm}\psi_0(g):=g(\perp_0), \\
	&\phi_{n+1}(d):=\phi_n\circ d\circ\psi_n, \hspace{1.8cm}\psi_{n+1}(g):=\psi_n\circ g\circ\phi_n,
	\end{align*}
	where $\boldsymbol{\lambda} a\in D_0.d\in[D_0\rightarrow D_0]$ is the constant function to $d\in D_0$.  
\end{defin}

\begin{defin}[The sequence $D_0,D_1,D_2,\ldots$]\label{sequence-cpos-definition}
	For each $n\geq 0$, define $D_n$ by recursion 
	\begin{align*}
	&\hspace{1.3cm}D_0:=\mathbb{N}^+, \\
	&\hspace{1.3cm}D_{n+1}:=[D_n\rightarrow D_n]. 
	\end{align*}
	The $\sqsubseteq$-relation on $D_n$ will be denoted just `$\,\,\sqsubseteq$'. The least element of $D_n$ will be denoted $\perp_n$.
\end{defin}

By Remark \ref{functionsetcpo-lemma}, every $D_n$ is a c.p.o.

\begin{defin}[Construction of $D_\infty$]\label{D_infty-definition}
	We define the c.p.o.\ $D_\infty$ to be the set of all infinite sequences
	$$d=\left\langle d_0,d_1,d_2,\ldots\right\rangle,$$
	such that $d_n\in D_n$ and $\psi_n(d_{n+1})=d_n$, for all $n\geq 0$, where $\psi_n$ is part of projection $\left\langle \phi_n,\psi_n\right\rangle $ from $D_{n+1}$ to $D_n$  (Hindley; Seldin, 2008).
	
	\medskip A relation $\sqsubseteq$ on $D_\infty$ is defined by 
	$$d\sqsubseteq d'\Longleftrightarrow (\forall n\geq 0)(d_n\sqsubseteq d_n').$$
\end{defin}

\medskip In (Barendregt, 1984) one has a way to prove that $D_\infty$ is a  $\lambda$-model through the result: If $D$ is a c.p.o.\ for which there exists a projection $\langle F, G\rangle$ from $[D\rightarrow D]$ to $D$ such that $G\circ F=I_{[D\rightarrow D]}$, then the triple  $\langle D,\bullet,\llbracket \, \rrbracket\rangle$ defined for every assignment $\rho:Var\rightarrow D$ by

\begin{enumerate}
	\item [(a)] $a\bullet b:=F(a)(b)$ for each $a,b\in D$,	
	\item [(b)] $\llbracket x \rrbracket_\rho:=\rho(x),$
	\item [(c)]$\llbracket PQ \rrbracket_{\rho}:=\llbracket P\rrbracket_{\rho}\bullet\llbracket Q\rrbracket_{\rho},$
	\item [(d)]$\llbracket\lambda x.P\rrbracket_{\rho}:=G(\boldsymbol{\lambda} d\in D.\llbracket P \rrbracket_{[d/x]_\rho})$, 
\end{enumerate}

\noindent is a $\lambda$-model. Also if $G\circ F=I_D$, i.e., $G=F^{-1}$ and $D\cong [D\rightarrow D]$, then  $\langle D,\bullet,\llbracket \, \rrbracket\rangle$ is an extensional $\lambda$-model.

\medskip So in the particular case of $D_\infty$, it holds that $D_\infty\cong [D_\infty\rightarrow D_\infty]$ where the isomorphism between c.p.o.'s $F:D_\infty\rightarrow[D_\infty\rightarrow D_\infty]$ is given for each $a\in D_\infty$ by
$$F(a)=\boldsymbol{\lambda} b\in D_\infty.a\bullet b,$$
whose inverse $F^{-1}:[D_\infty\rightarrow D_\infty]\rightarrow D_\infty$ corresponds to
$$F^{-1}=\boldsymbol{\lambda} f\in[D_\infty\rightarrow D_\infty].\bigsqcup_{n\geq 0}\phi_{n,\infty}(\boldsymbol{\lambda} a\in D_n.(\phi_{\infty,n}\circ f\circ\phi_{n,\infty})(a)).$$

Therefore $\langle D_\infty,\bullet,\llbracket \, \rrbracket\rangle$ is an extensional $\lambda$-model.

\section{The $\infty$-groupoid $\mathfrak{D}$ generated by an arbitrary topological space $D$} 

Next we show the construction of the $\infty$-groupoid from any topological space, through the use of higher fundamental groups. 

\begin{defin}[]\label{inftyglobular_D-definition}
	Let $D$ a topological space. Define the  $\infty$-globular set $\mathfrak{D}$ as the diagram
	
	$$\cdots\rightrightarrows^s_t \mathfrak{D}_n\rightrightarrows^s_t \mathfrak{D}_{n-1}\rightrightarrows^s_t\cdots\rightrightarrows^s_t \mathfrak{D}_1\rightrightarrows^s_t \mathfrak{D}_0,$$
	as follows 
	$$\mathfrak{D}_n:=\{\pi_n(D,d):d\in D\},$$
	where  $\pi_0(D,d):=d$, $\pi_n(D,d)$ is the fundamental group of dimension $n\geq 0$ and 
	$$s(\pi_{n+1}(D,d))=t(\pi_{n+1}(D,d)):=\pi_n(D,d).$$
\end{defin}

\begin{rem}
\label{testenv-remark}
Clearly $\mathfrak{D}$ is an $\infty$-globular set, since $s=t$, i.e., every morphism is an automorphism, then $s\circ s=s\circ t$ and $t\circ t=t\circ s$, see Figure \ref{fig:auto}. 
\end{rem}

\begin{figure}[ht!]
	\centering

	\fcolorbox{gray}{white}{\includegraphics[width=0.45\textwidth]{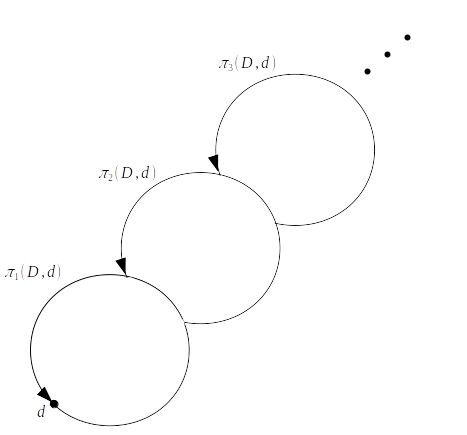}}
	\caption{$\infty$-Globular set $\mathfrak{D}$.}
	\label{fig:auto}
\end{figure}

\begin{rem}
\label{equalhgroup-remark}
For each $n\in\mathbb{N}$, the following holds:
$$\pi_n (D,d)=\pi_n (D,d')\Longleftrightarrow d=d'.$$
By the hypothesis that $D$ is connected by paths, and thus $\pi_n (D,d)\cong\pi_n (D,d')$ $($isomorphic groups$)$, it does not imply that they are equal.
\end{rem}

\begin{nota}
	For each $n\in\mathbb{N}$ and $d_0\in D$,  write
$$\mathfrak{D}_{n+1}(d_0):=\{\mathfrak{d}\in\mathfrak{D}_{n+1}:s(\mathfrak{d})=t(\mathfrak{d})=\pi_{n}(D,d_0)\}.$$
\end{nota}

\begin{prop}
\label{prop29}
For each $n\in\mathbb{N}$, the following holds
$$\mathfrak{D}_{n+1}(d_0)=\{\pi_{n+1}(D,d_0)\}.$$
\end{prop}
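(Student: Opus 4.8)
The plan is to establish this set equality by a direct double-inclusion argument, where the crucial ingredient is the injectivity of the assignment $d\mapsto\pi_n(D,d)$ recorded in Remark~\ref{equalhgroup-remark}. The whole point is that, because the source and target maps on $\mathfrak{D}$ coincide by construction (Definition~\ref{inftyglobular_D-definition}), the ``fibre'' $\mathfrak{D}_{n+1}(d_0)$ over $\pi_n(D,d_0)$ is forced to be a singleton, and the only non-formal move is to turn an equality of $n$-th fundamental groups into an equality of base points.

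First I would dispatch the easy inclusion $\{\pi_{n+1}(D,d_0)\}\subseteq\mathfrak{D}_{n+1}(d_0)$. By Definition~\ref{inftyglobular_D-definition} we have $\pi_{n+1}(D,d_0)\in\mathfrak{D}_{n+1}$, and the same definition gives $s(\pi_{n+1}(D,d_0))=t(\pi_{n+1}(D,d_0))=\pi_n(D,d_0)$. Hence $\pi_{n+1}(D,d_0)$ satisfies the defining condition of the notation $\mathfrak{D}_{n+1}(d_0)$, so it belongs to that set.

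For the reverse inclusion I would take an arbitrary $\mathfrak{d}\in\mathfrak{D}_{n+1}(d_0)$. Since $\mathfrak{d}\in\mathfrak{D}_{n+1}=\{\pi_{n+1}(D,d):d\in D\}$, there is some $d\in D$ with $\mathfrak{d}=\pi_{n+1}(D,d)$. The membership condition requires $s(\mathfrak{d})=\pi_n(D,d_0)$; but by construction $s(\pi_{n+1}(D,d))=\pi_n(D,d)$, so $\pi_n(D,d)=\pi_n(D,d_0)$. Applying Remark~\ref{equalhgroup-remark} yields $d=d_0$, and therefore $\mathfrak{d}=\pi_{n+1}(D,d_0)$. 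This gives $\mathfrak{D}_{n+1}(d_0)\subseteq\{\pi_{n+1}(D,d_0)\}$, and combined with the previous paragraph it establishes the claimed equality.

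I expect no genuine obstacle here, as everything reduces to unwinding the definitions of $\mathfrak{D}_{n+1}$ and of the source/target maps; the single substantive appeal is to Remark~\ref{equalhgroup-remark}, which is exactly what prevents two distinct base points from producing the same object in $\mathfrak{D}_{n+1}(d_0)$.
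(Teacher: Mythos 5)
Your argument is correct and follows essentially the same route as the paper: identify $\mathfrak{d}\in\mathfrak{D}_{n+1}(d_0)$ as some $\pi_{n+1}(D,d)$, use the source/target condition to get $\pi_n(D,d)=\pi_n(D,d_0)$, and invoke Remark~\ref{equalhgroup-remark} to conclude $d=d_0$. The only difference is that you also spell out the easy inclusion $\{\pi_{n+1}(D,d_0)\}\subseteq\mathfrak{D}_{n+1}(d_0)$, which the paper leaves implicit.
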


\begin{proof}
	Let $\mathfrak{d}\in\mathfrak{D}_{n+1}(d_0)$, then there is $d\in D$ such that $\mathfrak{d}=\pi_{n+1}(D,d)$. By Definition \ref{inftyglobular_D-definition} $s(\mathfrak{d})=t(\mathfrak{d})=\pi_{n}(D,d)$. Since $\mathfrak{d}\in\mathfrak{D}_{n+1}(d_0)$, then $\pi_{n}(D,d)=\pi_{n}(D,d_0)$, so $d=d_0$ by Remark \ref{equalhgroup-remark}, thus $\mathfrak{d}=\pi_{n+1}(D,d_0)$.   
\end{proof}

\begin{defin}[Diagonal]
	Let $D$ be a set. Define the diagonal on $D\times D$ as 
	$$Diag(D\times D):=\{(d,d')\in D\times D:d=d'\}$$
\end{defin}

\begin{lem}
\label{Diag-lemma}
For any natural number $n\geq 1$ and for each $0\leq p<n$,  $$\mathfrak{D}_n\times_{\mathfrak{D}_p}\mathfrak{D}_n=Diag(\mathfrak{D}_n\times \mathfrak{D}_n).$$
\end{lem}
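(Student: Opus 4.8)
The plan is to prove the set equality by unwinding the defining condition of the fibered product one pair at a time. Fix $n\geq 1$ and $0\leq p<n$, and take an arbitrary pair $(\mathfrak{d}',\mathfrak{d})\in\mathfrak{D}_n\times\mathfrak{D}_n$; by the description of $\mathfrak{D}_n$ in Definition \ref{inftyglobular_D-definition} I may write $\mathfrak{d}=\pi_n(D,d)$ and $\mathfrak{d}'=\pi_n(D,d')$ for suitable $d,d'\in D$. By definition the pair lies in $\mathfrak{D}_n\times_{\mathfrak{D}_p}\mathfrak{D}_n$ exactly when $t^{n-p}(\mathfrak{d})=s^{n-p}(\mathfrak{d}')$, so the whole statement reduces to computing the iterated source and target maps on elements of this form.

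The key structural observation is that in $\mathfrak{D}$ the source and target coincide: by Definition \ref{inftyglobular_D-definition} one has $s(\pi_{m+1}(D,d))=t(\pi_{m+1}(D,d))=\pi_m(D,d)$ for every $m$, so $s=t$ at each level and hence $s^{k}=t^{k}$ for every $k\geq 1$. Iterating the single-step formula then gives, for $0\leq p<n$,
$$s^{n-p}(\pi_n(D,d))=\pi_p(D,d),$$
and likewise $t^{n-p}(\pi_n(D,d'))=\pi_p(D,d')$. Because $s=t$, the membership condition $t^{n-p}(\mathfrak{d})=s^{n-p}(\mathfrak{d}')$ collapses to $\pi_p(D,d)=\pi_p(D,d')$.

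Finally I would invoke Remark \ref{equalhgroup-remark}, which states $\pi_p(D,d)=\pi_p(D,d')\Longleftrightarrow d=d'$; applying the same remark at level $n$ gives $d=d'\Longleftrightarrow\pi_n(D,d)=\pi_n(D,d')$, that is, $\mathfrak{d}=\mathfrak{d}'$. Chaining these equivalences shows that $(\mathfrak{d}',\mathfrak{d})\in\mathfrak{D}_n\times_{\mathfrak{D}_p}\mathfrak{D}_n$ if and only if $\mathfrak{d}=\mathfrak{d}'$, i.e.\ if and only if $(\mathfrak{d}',\mathfrak{d})\in Diag(\mathfrak{D}_n\times\mathfrak{D}_n)$, which is exactly the claimed equality. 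I do not expect any genuine obstacle here: the argument is pure bookkeeping, and the only point requiring a little care is the correct iteration $s^{n-p}(\pi_n(D,d))=\pi_p(D,d)$ all the way down to level $p$, including the boundary case $p=0$, where $\pi_0(D,d)=d$ makes the equivalence with $d=d'$ immediate and consistent with Remark \ref{equalhgroup-remark}.
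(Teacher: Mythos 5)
Your argument is correct and is essentially the same as the paper's: both unwind the fibered-product condition $t^{n-p}(\mathfrak{d})=s^{n-p}(\mathfrak{d}')$ using $s=t$ and the level-dropping formula to reduce it to $\pi_p(D,d)=\pi_p(D,d')$, then apply Remark \ref{equalhgroup-remark} twice to conclude $\mathfrak{d}=\mathfrak{d}'$. The paper merely presents this as a chain of displayed set equalities rather than in prose.
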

\begin{proof}
	\begin{align*}
\hspace{1cm}\mathfrak{D}_n\times_{\mathfrak{D}_p}\mathfrak{D}_n&=\{(\mathfrak{d},\mathfrak{d}')\in \mathfrak{D}_n\times \mathfrak{D}_n:t^{n-p}(\mathfrak{d})=s^{n-p}(\mathfrak{d}')\} \\
& =\{(\mathfrak{d},\mathfrak{d}')\in \mathfrak{D}_n\times\mathfrak{D}_n:\pi_{n-(n-p)}(D,d)=\pi_{n-(n-p)}(D,d')\}  \\
& =\{(\mathfrak{d},\mathfrak{d}')\in \mathfrak{D}_n\times\mathfrak{D}_n:\pi_p(D,d)=\pi_p(D,d')\}  \\
& =\{(\mathfrak{d},\mathfrak{d}')\in \mathfrak{D}_n\times\mathfrak{D}_n:d=d'\}  \\
& =\{(\mathfrak{d},\mathfrak{d}')\in \mathfrak{D}_n\times\mathfrak{D}_n:\mathfrak{d}=\mathfrak{d}'\} \\
&=Diag(\mathfrak{D}_n\times \mathfrak{D}_n).
\end{align*}
\end{proof}
The Lemma \ref{Diag-lemma} indicates that it is enough to define the composition for pairs  $(\mathfrak{d},\mathfrak{d})\in\mathfrak{D}_n\times\mathfrak{D}_n $. 

\begin{defin}[Composition]\label{Composition-definition}
	For each $0\leq p<n$, define the composition of $\mathfrak{d}\in\mathfrak{D}_n$ with itself by
	$$\mathfrak{d}\circ_p\mathfrak{d}:=\{x \ast_p y: x,y\in\mathfrak{d}\}$$
	where $\ast_p$ is the path concatenation operator.  
\end{defin}

\begin{lem}
\label{idemcomposition-lemma}
For all $n\geq 1$, $0\leq p<n$ e $\mathfrak{d}\in\mathfrak{D}_n$, 
$$\mathfrak{d}\circ_p\mathfrak{d}=\mathfrak{d}.$$
\end{lem}
\begin{proof}
	Since $(\mathfrak{d},\ast_p)$ is a group, it is clear that $\mathfrak{d}\circ_p\mathfrak{d}\subseteq\mathfrak{d}$. On the other hand, if $x\in\mathfrak{d}$ then $x=x\ast_p e\in\mathfrak{d}\circ_p\mathfrak{d}$, where $e$ is the  identity element of the group $(\mathfrak{d},\ast_p)$. Thus $\mathfrak{d}\circ_p\mathfrak{d}=\mathfrak{d}$.  
\end{proof}

\begin{defin}[Identity]\label{Identity-definition}
	For each $n\in\mathbb{N}$ e $\mathfrak{d}=\pi_n(D,d)\in\mathfrak{D}_n$, define the identity function $i:\mathfrak{D}_n\rightarrow\mathfrak{D}_{n+1}$, $i(\mathfrak{d})=1_{\mathfrak{d}}$ as follows
	$$1_{\mathfrak{d}}:=\pi_{n+1}(D,d).$$
\end{defin}

\begin{teor}
	$\mathfrak{D}$ is an $\infty$-groupoid.
\end{teor}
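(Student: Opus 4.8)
The plan is to verify directly the seven axioms (a)--(g) in the definition of a strict $\infty$-groupoid, exploiting the two structural peculiarities of $\mathfrak{D}$ already established: first, by Remark \ref{testenv-remark} every cell is an automorphism, so $s=t$ on each $\mathfrak{D}_n$; and second, by Lemma \ref{Diag-lemma} the domain $\mathfrak{D}_n\times_{\mathfrak{D}_p}\mathfrak{D}_n$ of the composite $\circ_p$ is exactly the diagonal, so the only composites that ever occur are of a cell $\mathfrak{d}$ with itself, and by Lemma \ref{idemcomposition-lemma} these satisfy $\mathfrak{d}\circ_p\mathfrak{d}=\mathfrak{d}$. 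Together these force every composition appearing in any axiom to collapse to a single cell, so the bulk of the verification is immediate.

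Concretely, I would first record the computation underlying the identity and inverse laws. Writing $\mathfrak{d}=\pi_n(D,d)$, the target iterates down to $t^{n-p}(\mathfrak{d})=\pi_p(D,d)$, and the identity function of Definition \ref{Identity-definition} raises it back up, so that $i^{n-p}(t^{n-p}(\mathfrak{d}))=\pi_n(D,d)=\mathfrak{d}$, and likewise $i^{n-p}(s^{n-p}(\mathfrak{d}))=\mathfrak{d}$ since $s=t$. With this in hand the axioms fall out as follows. For (a), idempotency gives the case $p=n-1$ at once, while for $p\le n-2$ both sides equal $s(\mathfrak{d})$ (resp.\ $t(\mathfrak{d})$) after another application of idempotency one dimension down. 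Axiom (b) is exactly the definition of $s,t,i$ on $\mathfrak{D}$. For associativity (c) and the interchange laws (e), (f), the diagonal constraints in the hypotheses force all the named cells to coincide, whence every grouping reduces by idempotency to that single cell (and, in (f), to its identity).

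For the identity axiom (d) I would combine the computation above with idempotency: $i^{n-p}(t^{n-p}(\mathfrak{d}))\circ_p\mathfrak{d}=\mathfrak{d}\circ_p\mathfrak{d}=\mathfrak{d}$, and symmetrically on the right. For the inverse axiom (g) I would simply take $\bar{\mathfrak{d}}:=\mathfrak{d}$; the required conditions $s(\bar{\mathfrak{d}})=t(\mathfrak{d})$ and $t(\bar{\mathfrak{d}})=s(\mathfrak{d})$ hold because $s=t$, and the two defining equations again reduce to $\mathfrak{d}\circ_p\mathfrak{d}=\mathfrak{d}=i^{n-p}(s^{n-p}(\mathfrak{d}))$ by the same two ingredients.

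I do not expect a genuine obstacle here: the substance of the theorem has already been absorbed into Lemmas \ref{Diag-lemma} and \ref{idemcomposition-lemma}, and the only point requiring a little care is the bookkeeping for the iterated source/target/identity operators in axioms (d) and (g), namely checking that applying $i$ exactly $n-p$ times inverts the $n-p$ applications of $t$. Everything else is a mechanical unwinding of the definitions.
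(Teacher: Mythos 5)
Your proposal is correct and follows essentially the same route as the paper: reduce all composites to self-composites via Lemma \ref{Diag-lemma}, collapse them by the idempotency of Lemma \ref{idemcomposition-lemma}, note that $s=t$, and check that $i^{n-p}(t^{n-p}(\pi_n(D,d)))=\pi_n(D,d)$ for axioms (d) and (g), taking each cell as its own inverse. No substantive difference from the paper's argument.
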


\begin{proof}
	Let $n\geq 1$, $0\leq p<n$. For the axioms related to composition of morphisms, Lemma \ref{Diag-lemma} allows us to verify them by the composition of  $\mathfrak{d}=\pi_{n}(D,d)\in\mathfrak{D}_n$ with itself, then
	
	\medskip a. (sources and targets of composites)  by Lemma 3.2 and Definition 3.1 we have
	\begin{align*}
	& s(\mathfrak{d}\circ_p\mathfrak{d})=s(\mathfrak{d})=s(\mathfrak{d})\circ_p s(\mathfrak{d}), \\
	&  t(\mathfrak{d}\circ_p\mathfrak{d})=t(\mathfrak{d})=t(\mathfrak{d})\circ_p t(\mathfrak{d}),
	\end{align*}
	
	b. (sources and targets of identities)  by Definitions 3.1 and 3.3 
	\begin{align*}
	& s(1_\mathfrak{d})=s(\pi_{n+1}(D,d))=\pi_n(D,d)=\mathfrak{d}, \\
	& t(1_\mathfrak{d})=t(\pi_{n+1}(D,d))=\pi_n(D,d)=\mathfrak{d}, 
	\end{align*}
	
	c. (associativity) by Lemma 3.2
	$$(\mathfrak{d}\circ_p\mathfrak{d})\circ_p\mathfrak{d}=\mathfrak{d}\circ_p\mathfrak{d}=\mathfrak{d}\circ_p(\mathfrak{d}\circ_p\mathfrak{d}),$$
	
	d. (identities) by Definitions 3.1 , 3.3 and Lemma 3.2
	\begin{align*}
	&\hspace{1cm}i^{n-p}(t^{n-p}(\mathfrak{d}))\circ_p\mathfrak{d}=i^{n-p}(\pi_p(D,d))\circ_p\mathfrak{d}=\pi_{p+n-p}(D,d)\circ_p\mathfrak{d}=\mathfrak{d}\circ_p\mathfrak{d}=\mathfrak{d}, \\ 
	&\hspace{1cm}\mathfrak{d}\circ_p i^{n-p}(s^{n-p}(\mathfrak{d}))=\mathfrak{d}\circ_p i^{n-p}(\pi_p(D,d))=\mathfrak{d}\circ_p\pi_{p+n-p}(D,d)=\mathfrak{d}\circ_p\mathfrak{d}=\mathfrak{d},
	\end{align*} 
	
	e. (binary interchange) let $0\leq q<p<n$, by Lemma 3.2
	$$(\mathfrak{d}\circ_p\mathfrak{d})\circ_q(\mathfrak{d}\circ_p\mathfrak{d})=\mathfrak{d}\circ_q\mathfrak{d}=(\mathfrak{d}\circ_q\mathfrak{d})\circ_p(\mathfrak{d}\circ_q\mathfrak{d}),$$
	
	f. (nullary interchange) let $0\leq q<p<n$, by Lemma 3.2
	$$1_\mathfrak{d}\circ_q 1_\mathfrak{d}=1_\mathfrak{d}=1_{\mathfrak{d}\circ_q\mathfrak{d}},$$
	
	g. (inverse) by Lemma 3.2 and (d)
	$$\mathfrak{d}\circ_p\mathfrak{d}=\mathfrak{d}=i^{n-p}(t^{n-p}(\mathfrak{d}))=i^{n-p}(s^{n-p}(\mathfrak{d})),$$
thus $\mathfrak{d}$ is the inverse of itself.
\end{proof}

Next we define $\mathfrak{D}_\infty$ as a set in the sense of ZF set theory.   

\begin{defin}[The set $\mathfrak{D}_\infty$]\label{Setinftygrupoid_D-definition}
	Define $\mathfrak{D}_\infty$ as the set of all infinite sequences
	$$\mathfrak{d}:=\langle\mathfrak{d}_0,\mathfrak{d}_1,\mathfrak{d}_2,\dots\rangle,$$
	such that $\mathfrak{d}_n\in\mathfrak{D}_n$ (of the Definition \ref{inftyglobular_D-definition})  and $s(\mathfrak{d}_{n+1})=t(\mathfrak{d}_{n+1})=\mathfrak{d}_n$, for each $n\in\mathbb{N}$. 
\end{defin}

\begin{prop}\label{propo3.2}
\label{prop1}
$\mathfrak{d}\in\mathfrak{D}_\infty$ if and only if there exists $d\in D$ for all $n\in\mathbb{N}$, such that $\mathfrak{d}_n=\pi_{n}(D,d)$. 
\end{prop}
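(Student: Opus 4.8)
The plan is to prove the two implications separately, dispatching the backward direction by direct substitution and the forward direction by induction on $n$, with the injectivity of $d\mapsto\pi_n(D,d)$ doing the real work.

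For the backward implication, I would suppose such a $d\in D$ exists with $\mathfrak{d}_n=\pi_n(D,d)$ for every $n$, and simply verify the two defining conditions of $\mathfrak{D}_\infty$ from Definition \ref{Setinftygrupoid_D-definition}. Each $\mathfrak{d}_n=\pi_n(D,d)$ lies in $\mathfrak{D}_n$ by Definition \ref{inftyglobular_D-definition}, and the source/target clause of that same definition gives $s(\mathfrak{d}_{n+1})=s(\pi_{n+1}(D,d))=\pi_n(D,d)=\mathfrak{d}_n$, and likewise $t(\mathfrak{d}_{n+1})=\mathfrak{d}_n$. Hence $\mathfrak{d}\in\mathfrak{D}_\infty$ with no further work.

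For the forward implication, suppose $\mathfrak{d}\in\mathfrak{D}_\infty$. The first step is to extract the candidate point: since $\mathfrak{d}_0\in\mathfrak{D}_0=\{\pi_0(D,d):d\in D\}=D$, I set $d:=\mathfrak{d}_0$, so that $\mathfrak{d}_0=\pi_0(D,d)$, which handles the base case $n=0$. I would then argue by induction: assuming $\mathfrak{d}_n=\pi_n(D,d)$, the defining condition of $\mathfrak{D}_\infty$ yields $s(\mathfrak{d}_{n+1})=t(\mathfrak{d}_{n+1})=\mathfrak{d}_n=\pi_n(D,d)$, so $\mathfrak{d}_{n+1}$ belongs to the set $\mathfrak{D}_{n+1}(d)$. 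Invoking Proposition \ref{prop29}, which states that $\mathfrak{D}_{n+1}(d)=\{\pi_{n+1}(D,d)\}$ is a singleton, forces $\mathfrak{d}_{n+1}=\pi_{n+1}(D,d)$. This closes the induction and produces a single $d$ valid at all levels simultaneously.

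The only delicate point, and the main (if modest) obstacle, is recognizing that the entire argument hinges on the fact that the same $d$ works uniformly across the whole tower. This is exactly what the injectivity $\pi_n(D,d)=\pi_n(D,d')\Rightarrow d=d'$ of Remark \ref{equalhgroup-remark} guarantees, and it is precisely this injectivity that is packaged into Proposition \ref{prop29}. Without it one could only conclude that each $\mathfrak{d}_{n+1}$ arises from \emph{some} $d'$, and the source-compatibility equation $\pi_n(D,d')=\pi_n(D,d)$ would fail to force $d'=d$; the inductive propagation of the initial choice $d=\mathfrak{d}_0$ up the sequence would then break down.
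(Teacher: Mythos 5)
Your proof is correct and follows essentially the same route as the paper's: the forward direction by induction on $n$, anchoring at $\mathfrak{d}_0$ and invoking Proposition \ref{prop29} to force $\mathfrak{d}_{n+1}=\pi_{n+1}(D,d)$, and the backward direction by directly checking the source/target conditions of Definition \ref{Setinftygrupoid_D-definition}. Your closing remark correctly identifies the injectivity of $d\mapsto\pi_n(D,d)$ (Remark \ref{equalhgroup-remark}) as the engine behind Proposition \ref{prop29}, which is exactly what the paper relies on.
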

\begin{proof}
	Let $\mathfrak{d}\in\mathfrak{D}_\infty$, then $\mathfrak{d}_0\in\mathfrak{D}_0$, i.e., $\mathfrak{d}_0=\pi_0(D,d)$ for some and unique $d\in D$. Suppose that  $\mathfrak{d}_n=\pi_{n}(D,d)$ and we will prove by induction that $\mathfrak{d}_{n+1}=\pi_{n+1}(D,d)$. Since $\mathfrak{d}\in\mathfrak{D}_\infty$, by induction hypothesis $s(\mathfrak{d}_{n+1})=t(\mathfrak{d}_{n+1})=\mathfrak{d}_n=\pi_{n}(D,d)$. By Proposition \ref{prop29} we have $\mathfrak{d}_{n+1}=\pi_{n+1}(D,d)$. On the other hand, if $\mathfrak{d}_n=\pi_{n}(D,d)$ for every $n\in\mathbb{N}$, then $s(\mathfrak{d}_{n+1})=t(\mathfrak{d}_{n+1})=\pi_n(D,d)=\mathfrak{d}_n$ for all $n\in\mathbb{N}$, thus $\mathfrak{d}\in\mathfrak{D}_\infty$.     
\end{proof}

\begin{prop}
\label{prop37}
Let $\mathfrak{a}, \mathfrak{b}\in\mathfrak{D}_\infty$, such that $\mathfrak{a}_n=\pi_{n}(D,a)$ and $\mathfrak{b}_n=\pi_{n}(D,b)$ for all $n\in\mathbb{N}$, then  
$$\mathfrak{a}=\mathfrak{b}\Longleftrightarrow \mathfrak{a}_0=\mathfrak{b}_0.$$
\end{prop}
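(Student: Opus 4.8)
The plan is to prove the two implications separately, the forward one being immediate and the backward one resting on the observation that the degree-zero component of a sequence in $\mathfrak{D}_\infty$ pins down the base point uniquely.

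First I would dispatch the direction ($\Rightarrow$): if $\mathfrak{a}=\mathfrak{b}$ as elements of $\mathfrak{D}_\infty$, then by the definition of equality of sequences (Definition \ref{Setinftygrupoid_D-definition}) all of their components coincide, and in particular the zeroth ones, so $\mathfrak{a}_0=\mathfrak{b}_0$.

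For the nontrivial direction ($\Leftarrow$), I would start from $\mathfrak{a}_0=\mathfrak{b}_0$. By hypothesis $\mathfrak{a}_0=\pi_0(D,a)$ and $\mathfrak{b}_0=\pi_0(D,b)$, and by the convention $\pi_0(D,d):=d$ fixed in Definition \ref{inftyglobular_D-definition}, this equality reads $a=b$. Equivalently, one may invoke Remark \ref{equalhgroup-remark}, which yields $\pi_0(D,a)=\pi_0(D,b)\Rightarrow a=b$ directly. Once $a=b$ is established, for every $n\in\mathbb{N}$ one has $\mathfrak{a}_n=\pi_n(D,a)=\pi_n(D,b)=\mathfrak{b}_n$, so the two sequences agree componentwise and hence $\mathfrak{a}=\mathfrak{b}$.

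The only subtlety—which is really the crux rather than an obstacle—is to notice that although $\mathfrak{D}_\infty$ consists of infinite sequences, the single base point recovered from $\mathfrak{a}_0$ already determines the entire sequence. This is precisely the content of Proposition \ref{prop1}: membership in $\mathfrak{D}_\infty$ forces $\mathfrak{a}_n=\pi_n(D,a)$ for one common $a$, so agreement at level $0$ propagates to all levels. No genuine difficulty arises; the statement is a one-line consequence of the definition of $\pi_0$ together with equality of sequences.
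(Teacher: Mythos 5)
Your proposal is correct and follows essentially the same route as the paper: the forward direction from componentwise equality of sequences, and the backward direction by identifying $\mathfrak{a}_0=\pi_0(D,a)=a$ and $\mathfrak{b}_0=b$, then propagating $a=b$ to all levels via Proposition \ref{propo3.2}. Nothing further is needed.
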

\begin{proof}
	If $\mathfrak{a}=\mathfrak{b}$, by Definition \ref{Setinftygrupoid_D-definition} we have $\mathfrak{a}_0=\mathfrak{b}_0$. If $a=\mathfrak{a}_0=\mathfrak{b}_0=b$, by the Proposition \ref{propo3.2} above $\mathfrak{a}_n=\pi_{n}(D,a)=\pi_{n}(D,b)=\mathfrak{b}_n$. 
\end{proof}

\section{Higher fundamental groups of a c.p.o.}

Next we show that every higher fundamental groupoid on any c.p.o.\ are trivial, particularly those generated by $D_\infty$.

\begin{lem}\label{bottomnotin-lemma}
	Let $D$ be a c.p.o.\ with the Scott topology. If $A\neq D$ is an open, then $\perp\notin A$. 
\end{lem}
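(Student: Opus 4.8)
The plan is to argue by contraposition: instead of proving directly that $A \neq D$ forces $\perp \notin A$, I would assume $\perp \in A$ and deduce $A = D$. The only two facts needed are that $\perp$ is the least element of the c.p.o.\ (so $\perp \sqsubseteq b$ holds for \emph{every} $b \in D$) and that every nonempty Scott-open set is final in the sense of the Definition of final and inaccessible sets, i.e.\ upward closed under $\sqsubseteq$.

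Concretely, I would carry out the following steps. First, suppose $\perp \in A$. Second, fix an arbitrary $b \in D$; since $\perp$ is the least element we have $\perp \sqsubseteq b$. Third, invoke finality of the open set $A$: from $\perp \in A$ and $\perp \sqsubseteq b$ the defining implication of a final set yields $b \in A$. Since $b$ was arbitrary, this gives $D \subseteq A$, and combined with $A \subseteq D$ we conclude $A = D$. The contrapositive is exactly the statement of the lemma: if $A \neq D$, then $\perp \notin A$.

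I do not anticipate any genuine obstacle here; the result is essentially immediate once one notices that upward closure from the bottom element swallows the entire poset. The only point worth a moment's care is the degenerate case where $A = \emptyset$: the Scott topology explicitly adjoins $\emptyset$ as an open set, and in that case $\perp \notin A$ holds vacuously, so the conclusion is unaffected. For $A \neq \emptyset$ the finality property is available directly from the definition of the Scott topology, which is all the argument uses.

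Thus the proof reduces to a one-line application of finality to the pair $\perp \sqsubseteq b$, and the role of the hypothesis $A \neq D$ is simply to block the conclusion $A = D$ that $\perp \in A$ would otherwise entail.
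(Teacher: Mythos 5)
Your proposal is correct and is essentially the same argument as the paper's: the paper also assumes $\perp\in A$, uses that $\perp$ is the least element together with finality of the Scott-open $A$ to conclude $A=D$, and derives a contradiction with $A\neq D$ (your contrapositive phrasing and the remark about $A=\emptyset$ are cosmetic differences only).
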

\begin{proof}
	Let $A$ be an open from $D$. Suppose  $\perp\in A$. Since $D$ is a c.p.o, then for all $d\in D$ we have $\perp\sqsubseteq d$. Since $A$ is final, $\perp\in A$ and $\perp\sqsubseteq d$, then $d\in A$. Thus $A=D$, which is a contradiction. 
\end{proof}

\begin{teor}
\label{trivialgroupcpo-theorem}
If $D$ is a c.p.o.\ with the Scott topology, then $\pi_{n}(D,d)=\{[c^n(d)]\}$ for all $d\in D$ and $n\geq 0$. 
\end{teor}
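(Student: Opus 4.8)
The plan is to reduce to Theorem~\ref{trivialgroupcontrac-theorem} by showing that every c.p.o.\ $D$ equipped with the Scott topology is contractible; once contractibility is established, the desired equality $\pi_n(D,d)=\{[c^n(d)]\}$ follows at once for every base point $d\in D$ and every $n\geq 0$, which is exactly the statement to be proved.

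To exhibit an explicit contraction I would take the bottom element $\perp$ as the contraction point and define $H\colon D\times[0,1]\rightarrow D$ by
$$H(d,t)=\begin{cases}\perp & \text{if } t=0,\\ d & \text{if } t>0.\end{cases}$$
This $H$ satisfies $H(d,0)=\perp=f_c(d)$ and $H(d,1)=d=I_D(d)$, so provided it is continuous it is a contraction of the constant map $f_c$ onto the identity $I_D$, witnessing that $D$ is contractible.

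The crux of the argument is the continuity of $H$ with respect to the product topology on $D\times[0,1]$, and this is precisely where Lemma~\ref{bottomnotin-lemma} does the work. I would verify continuity by inspecting preimages of Scott-open sets $A\subseteq D$. If $A=D$ then $H^{-1}(A)=D\times[0,1]$ is open, and if $A=\emptyset$ then $H^{-1}(A)=\emptyset$. In the remaining case $A$ is a proper non-empty open, so Lemma~\ref{bottomnotin-lemma} gives $\perp\notin A$; hence no point of the form $(d,0)$ lies in $H^{-1}(A)$, whereas for $t>0$ we have $H(d,t)=d$, so $(d,t)\in H^{-1}(A)$ if and only if $d\in A$. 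Therefore $H^{-1}(A)=A\times(0,1]$, which is open in $D\times[0,1]$ since $A$ is open in $D$ and $(0,1]$ is open in $[0,1]$. Thus $H$ is continuous.

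Having shown $D$ to be contractible, Theorem~\ref{trivialgroupcontrac-theorem} yields $\pi_n(D,d)=\{[c^n(d)]\}$ for every $n\geq 0$ and every $d\in D$, which completes the proof. The only delicate point — and hence the main obstacle — is the continuity verification: naively $H$ appears to jump at $t=0$, but the peculiar topological behaviour of $\perp$, namely that it is contained in no proper open set, absorbs this apparent discontinuity, so the essential step is recognizing that Lemma~\ref{bottomnotin-lemma} is exactly what forces the preimage $A\times(0,1]$ to be open.
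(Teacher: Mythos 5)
Your proposal is correct and follows essentially the same route as the paper: both reduce to Theorem~\ref{trivialgroupcontrac-theorem} via the same two-valued homotopy $H$ collapsing $D$ to $\perp$, and both use Lemma~\ref{bottomnotin-lemma} to show $H^{-1}(A)=A\times(0,1]$ is open for a proper open $A$. Your explicit handling of the cases $A=D$ and $A=\emptyset$ is a small tidiness improvement, but the argument is the same.
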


\begin{proof}
	By Theorem \ref{trivialgroupcontrac-theorem}, it is enough to check that $D$ is contractible, i.e., one has to check that  for the identity function $I_{D}:D\rightarrow D$ there exists some constant function $f_c:D\rightarrow D$  such that $f_c\simeq I_D$.
Consider the map $H:D\times[0,1]\rightarrow D$ defined by 
\begin{equation*}
\hspace{0.5cm}H(x,t) =
\begin{cases}
\perp & \text{if $t=0$,}\\
x & \text{if $t\in (0,1]$}
\end{cases}
\end{equation*}
and  let us show that $H$ is a contraction from $D$. Clearly $H(\,\cdot\,,0)=f_\perp(\cdot)$ and $H(\,\cdot\,,t)=I_D(\cdot)$ if $t\in(0,1]$. Now take any open  $A\neq D$, then
\begin{align*}
\hspace{1cm}H^{-1}(A)&=\{(x,t)\in D\times [0,1]:H(x,t)\in A\} \\
&=\left(\{ x\in D:H(x,0)=\perp\in A\}\times\{0\}\right)\cup\left(\{x\in D:x\in A\}\times(0,1]\right) \\
&=\left(\emptyset\times\{0\}\right)\cup\left(A\times(0,1]\right) \,\,\, \text{(by Lemma \ref{bottomnotin-lemma}, $\perp\notin A$)} \\
&=A\times(0,1], 
\end{align*}
which is an open from $D\times[0,1]$, then $H$ is continuous. Thus $H$ is a contraction from $D$.
\end{proof}

Let us now generalize the notion of a continuous path from point $a$ to point $b$, to $n$-paths based on points $a$ and $b$ in any topological space $D$. 

\begin{nota}
	Take a map $p:[0,1]^n\rightarrow D$. Write $p[\,t_r]:[0,1]^{n-1}\rightarrow D$ for the map such that
	$$p[t_r](t_1,\ldots,t_{r-1},t_{r+1},\ldots,t_n):=p(t_1,\ldots,t_{r-1},t_r,t_{r+1},\ldots,t_n),$$
	for $r<s$ we denoted,
	$$p[t_r,t_s](t_1,\ldots,t_{r-1},t_{r+1},\ldots,t_{s-1},t_{s+1},\ldots,t_n):=$$
	$$p(t_1,\ldots,t_{r-1},t_r,t_{r+1},\ldots,t_{s-1},t_s,t_{s+1},\ldots,t_n),$$
	and for a fixed $a\in [0,1]$, write $p[t_r=a]:[0,1]^{n-1}\rightarrow D$ for the map such that 
	$$p[t_r=a](t_1,\ldots,t_{r-1},t_{r+1},\ldots,t_n):=p(t_1,\ldots,t_{r-1},a,t_{r+1},\ldots,t_n),$$
	for $r<s$ we wrote,
	
	$$p[t_r=a,t_s=b](t_1,\ldots,t_{r-1},t_{r+1},\ldots,t_{s-1},t_{s+1},\ldots,t_n):=$$
	$$p(t_1,\ldots,t_{r-1},a,t_{r+1},\ldots,t_{s-1},b,t_{s+1},\ldots,t_n).$$ 
\end{nota}

\begin{defin}[n-path]\label{n-path-definition}
	Let $D$ be a topological space. For each $n\in\mathbb{N}$, define an $n$-path  based in the points $a,b\in D$ as continuous function $p:[0,1]^n\rightarrow D$ such that 
	$$p[t_n=0](t_1,\ldots,t_{n-1})=a\,\,\,\text{and}\,\,\, p[t_n=1](t_1,\ldots,t_{n-1})=b,$$
	for each $t_1,\ldots,t_{n-1}\in [0,1]$.

	\begin{itemize}
		\item 	For each $r<n$, define the product $\ast_r$ of $n$-paths $p$ and $q$ as
		\begin{enumerate}
			\item if $p[t_1=1]=q[t_1=0]$, define the n-path
			\begin{equation*}
			(p\ast_{n-1} q)(t_1,\ldots,t_n):=
			\begin{cases}
			p(2t_1,t_2,\ldots,t_n) & \text{if \, $0\leq t_1\leq \frac{1}{2}$,}\\
			q(2t_1-1,t_2,\ldots,t_n) & \text{if \, $\frac{1}{2}\leq t_1\leq 1$.}
			\end{cases}
			\end{equation*}
			\item if $r<n-1$ and $p[t_1=1,\ldots,t_{(n-1)-r}=1]=q[t_1=0,\ldots,t_{(n-1)-r}=0]$, define the n-path $p\ast_r q$ such that 
			$$(p\,\ast_r\, q)[t_1,\ldots,t_{(n-1)-r}]:=p[t_1,\ldots,t_{(n-1)-r}]\ast_r\, q[t_1,\ldots,t_{(n-1)-r}].$$
		\end{enumerate}
		\item For each $n$-path $p$, define the identity $(n+1)$-path as the constant path $c(p):[0,1]^{n+1}\rightarrow D$ such that $(c(p))[t_1]=p$ for all $t_1\in [0,1]$.  
	\end{itemize}
	
\end{defin}

Define the equivalence relation $=_h$ on $n$-paths as given by: $p_1 =_h p_2$ if there exists an $(n+1)$-path $p$ from $p_1$ to $p_2$.

\medskip  Thus, the set of all equivalence classes on the set of $n$-paths along with the product of $n$-paths, it generates a groupoid; where  the product between classes is defined naturally by $[p]\ast_r [q]:=[p\ast_r q]$, which satisfies $[c^{n-r}(p[t_1=0,\ldots,t_{n-r}=0])]\ast_r [p]=[p]$ and $[p] \ast_r [c^{n-r}(p[t_1=1,\ldots,t_{n-r}=1])]=[p]$, and for .each $n$-path $p$ there is a $\bar{p}$ such that $\bar{p}[t_1,\ldots,t_{n-r}]=p[1-t_1,\ldots,1-t_{n-r}]$, for which it holds that $[p]\ast_r [\bar{p}]=[c^{n-r}(p[t_1=0,\dots,t_{n-r}=0])]$ and $[\bar{p}]\ast_r [p]=[c^{n-r}(p[t_1=1,\ldots,t_{n-r}=1])]$.

\begin{ejem}
	For $n=2$, any 2-paths $p$ and $q$ would be homotopies between paths (1-paths), whose $r$-product is given according to the cases: 
	
	\begin{enumerate}
		\item If $p[t_1=1]=q[t_1=0]$, the 1-product is 
		\begin{equation*}
		(p\ast_1 q)(t_1,t_2):=
		\begin{cases}
		p(2t_1,t_2) & \text{if \, $0\leq t_1\leq \frac{1}{2}$,}\\
		q(2t_1-1,t_2) & \text{if \, $\frac{1}{2}\leq t_1\leq 1$.}
		\end{cases}
		\end{equation*}
		so $(p\ast_1 q)[t_1=0]=p[t_1=0]$ and $(p\ast_1 q)[t_1=1]=q[t_1=1]$, i.e., $(p\ast_1 q)$ is a homotopy from path $p[t_1=0]$ to path $q[t_1=1]$ as seen in Figure \ref{fig01}. 
		
		\begin{figure}[!htb]
			\centering
			\begin{tikzcd}
			a \arrow[bend left=80]{r}[name=LUU, below]{}
			\arrow{r}[name=LUD]{}
			\arrow[swap]{r}[name=LDU]{}
			\arrow[bend right=80]{r}[name=LDD]{}
			\arrow[Rightarrow,to path=(LUU) -- (LUD)\tikztonodes]{r}{p}
			\arrow[Rightarrow,to path=(LDU) -- (LDD)\tikztonodes]{r}{q}
			& 
			b
			\end{tikzcd}
			\caption{The product $p\ast_1 q$}\label{fig01}   
		\end{figure}
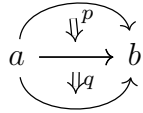

		\item If $p[t_2=1]=q[t_2=0]$, the $0$-product is given by
		\begin{equation*}
		(p\ast_0 q)(t_1,t_2)=(p\ast_0 q)[t_1](t_2)=(p[t_1]\ast_0 q[t_1])(t_2)=
		\end{equation*}
		\begin{equation*}
		=
		\begin{cases}
		p[t_1](2t_2) & \text{if \, $0\leq t_2\leq \frac{1}{2}$,}\\
		q[t_1](2t_2-1) & \text{if \, $\frac{1}{2}\leq t_2\leq 1$,}
		\end{cases}
		=
		\begin{cases}
		p(t_1,2t_2) & \text{if \, $0\leq t_2\leq \frac{1}{2}$,}\\
		q(t_1,2t_2-1) & \text{if \, $\frac{1}{2}\leq t_2\leq 1$.}
		\end{cases}
		\end{equation*}
		
		\noindent Then $(p\ast_0 q)[t_1=0]=p[t_1=0]\ast_0 q[t_1=0]$ and $(p\ast_0 q)[t_1=1]=p[t_1=1]\ast_0 q[t_1=1]$, i.e., $(p\ast_0 q)$ is a homotopy from path $p[t_1=0]\ast_0\, q[t_1=0]$ to path $p[t_1=1]\ast_0\, q[t_1=1]$ as in Figure \ref{fig02}.
		
		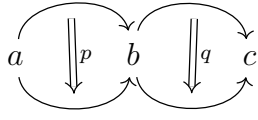
\begin{figure}[!htb]
			\begin{center}
				\begin{tikzcd}
				a \arrow[bend left=80]{r}[name=LUU, below]{}

				\arrow[bend right=80]{r}[name=LDD]{}
				\arrow[Rightarrow,to path=(LUU) --  (LDD)\tikztonodes]{r}{p}
				& 
				b
				\arrow[bend left=80]{r}[name=RUU, below]{}

				\arrow[bend right=80]{r}[name=RDD]{}
				\arrow[Rightarrow,to path=(RUU) -- (RDD)\tikztonodes]{r}{q}
				& c
				\end{tikzcd} 
				\caption{The product $p\ast_0 q$}\label{fig02}   
			\end{center}      
		\end{figure}
	\end{enumerate}
\end{ejem}

\begin{ejem}\label{Example 2-paths}
	Let the set $L=\{0,1,2\}\cup\{\bot,\top\}$. For all $a,b\in L$, define
	$$a\sqsubseteq b\Longleftrightarrow (a=\bot \,\, and \,\, b\in L) \,\, or \,\, (a\in L \,\,and \,\, b=\top) \,\, or \,\,(a=b).$$
	
	The poset $(L,\sqsubseteq)$ is a finite lattice, so its a c.p.o. which we endow with the Scott topology (see figure \ref{fig9}).  
	
	\begin{figure}[!htb]
		\centering
		\begin{tikzpicture}[scale=.7]
		\node (top) at (0,2.5) {$\top$};
		\node (0) at (-3,0) {$0$};
        \node (1) at (0,0) {$1$};
        \node (2) at (3,0) {$2$};
		\node (bot) at (0,-2.5) {$\bot$};
		\draw [-] (top) -- (0); 
		\draw[-](top) -- (1); 
		\draw[-](top) -- (2);
		\draw[-] (bot) -- (0);
		\draw[-] (1) -- (bot);
		\draw[-](2) --(bot);
		\end{tikzpicture}
		\caption{The c.p.o $L$}\label{fig9}
	\end{figure}
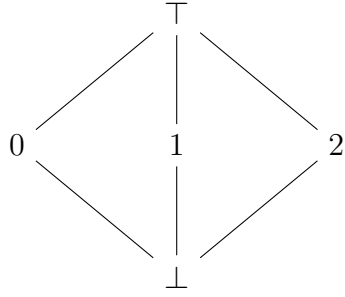 
	
	\medskip Let $a\in\{0,1,2\}$, define the 1-paths
	\begin{equation*}
	p^{a\rightarrow\top}(t):=
	\begin{cases}
	a & \text{if \, $t=0$,}\\
	\top & \text{if \, $0<t\leq 1$,}
	\end{cases}
	\hspace{1cm}
	p^{a\rightarrow\perp}(t):=
	\begin{cases}
	a & \text{if \, $0\leq t<1$,}\\
	\perp & \text{if \, $t=1$,}
	\end{cases}
	\end{equation*}
	$$p^{\top\rightarrow a}(t):=p^{a\rightarrow\top}(1-t),\hspace{1cm}p^{\perp\rightarrow a}(t):=p^{a\rightarrow\perp}(1-t).$$
	\begin{itemize}
		\item  For $a,b\in\{0,1,2\}$ we have 
		$$p^{a\rightarrow\top}\ast_0p^{\top\rightarrow b}=_h p^{a\rightarrow\perp}\ast_0p^{\perp\rightarrow b}.$$
		Since, for the paths  
		$$p_\top^{a\rightarrow b}:=p^{a\rightarrow\top}\ast_0p^{\top\rightarrow b}, \hspace{1cm}p_\perp^{a\rightarrow b}:=p^{a\rightarrow\perp}\ast_0p^{\perp\rightarrow b},$$
		there exists a 2-path $p^{a\Rightarrow b}$ such that for each $t_2\in [0,1]$
		$$p^{a\Rightarrow b}[t_1=0](t_2)=p_\top^{a\rightarrow b}(t_2)=\begin{cases}
		a & \text{if \, $t_2=0$,}\\
		\top & \text{if \, $0<t_2<1$,}\\
		b & \text{if \, $t_2=1$,}
		\end{cases}$$ 
		$$p^{a\Rightarrow b}[t_1=1](t_2)=p_\perp^{a\rightarrow b}(t_2)=\begin{cases}
		a & \text{if \, $0\leq t_2<\frac{1}{2}$,}\\
		\perp & \text{if \, $t_2=\frac{1}{2}$,}\\
		b & \text{if \, $\frac{1}{2}<t_2\leq 1$,}
		\end{cases}$$
		which is given by 
		$$p^{a\Rightarrow b}(t_1,t_2)=\begin{cases}
		\top & \text{if \, $(t_1=0$ and $0<t_2<1)$ or $(0<t_1,t_2<1)$,}\\
		a & \text{if \, $(0\leq t_1\leq1$ and $t_2=0)$ or $(t_1=1$ and $0<t_2<\frac{1}{2})$,} \\
		b & \text{if \, $(0\leq t_1\leq1$ and $t_2=1)$ or $(t_1=1$ and $\frac{1}{2}<t_2<1)$,} \\
		\perp & \text{if \, $t_1=1$ and $t_2=\frac{1}{2}.$}
		\end{cases}$$
		On the other hand, if we defined the path
		$$e(t):=\begin{cases}
		\top & \text{if \, $0\leq t<1$,}\\
		\perp & \text{if \, $t=1$.}\\
		\end{cases}$$   
		
		The 0-product of the 2-paths $p^{0\Rightarrow 1}$ and $p^{1\Rightarrow 2}$ is given by
		\begin{align*}
		(p^{0\Rightarrow 1}\ast_0 p^{1\Rightarrow 2})[t_1]&=p^{0\Rightarrow 1}[t_1]\ast_0 p^{1\Rightarrow 2}[t_1] \\
		&=p_{e(t_1)}^{0\rightarrow 1}\ast_0 p_{e(t_1)}^{1\rightarrow 2} \\
		&=(p^{0\rightarrow e(t_1)}\ast_0p^{e(t_1)\rightarrow 1})\ast_0(p^{1\rightarrow e(t_1)}\ast_0p^{e(t_1)\rightarrow 2}) \\
		&=_hp^{0\rightarrow e(t_1)}\ast_0(p^{e(t_1)\rightarrow 1}\ast_0p^{1\rightarrow e(t_1)})\ast_0p^{e(t_1)\rightarrow 2} \\
		&=_hp^{0\rightarrow e(t_1)}\ast_0 1_{e(t_1)}\ast_0p^{e(t_1)\rightarrow 2} \\
		&=_hp^{0\rightarrow e(t_1)}\ast_0p^{e(t_1)\rightarrow 2} \\
		&=p_{e(t_1)}^{0\rightarrow 2} \\
		&=p^{0\Rightarrow 2}[t_1].
		\end{align*}
		Therefore,
		$$p^{0\Rightarrow 1}\ast_0 p^{1\Rightarrow 2}=_hp^{0\Rightarrow 2}.$$
		
		\begin{figure}[!htb]
			\centering
			\begin{tikzpicture}[scale=.7]
			\node (top) at (0,2.5) {$\top$};
			\node (0) at (-3,0) {$0$};
			\node (p1) at (-0.8,0.2) {{\tiny $p^{0\Rightarrow 1}$}};
			\node (f1) at (-1.5,0) {{\LARGE $\Downarrow$}};
			\node (1) at (0,0) {$1$};
			\node (p2) at (0.8,0.2) {{\tiny $p^{1\Rightarrow 2}$}};
			\node (f2) at (1.5,0) {{\LARGE $\Downarrow$}};
			\node (2) at (3,0) {$2$};
			\node (bot) at (0,-2.5) {$\bot$};
			\draw [<-] (top) -- (0); 
			\draw[<->](top) -- (1); 
			\draw[->](top) -- (2);
			\draw[<-] (bot) -- (0);
			\draw[<->] (1) -- (bot);
			\draw[<-](2) --(bot);
			\end{tikzpicture}
			\caption{The product $p^{0\Rightarrow 1}\ast_0 p^{1\Rightarrow 2}$}\label{fig7}
		\end{figure}

		\item  Let $a\in\{0,1,2\}$, define the path
		$$q_a(t):=(p^{\top\rightarrow a}\ast_0p^{a\rightarrow \perp})(t)=\begin{cases}
		\top & \text{if \, $0\leq t<\frac{1}{2}$,}\\
		a & \text{if \, $\frac{1}{2}\leq t<1$,}\\
		\perp & \text{if \, $t=1$.}
		\end{cases}$$
		
		For $a,b\in\{0,1,2\}$ we have 
		$$q_a=p^{\top\rightarrow a}\ast_0p^{a\rightarrow \perp}=_hp^{\top\rightarrow b}\ast_0p^{b\rightarrow \perp}=q_b.$$
		Since, there is a 2-path $q^{a\Rightarrow b}$ such that 
		$$q^{a\Rightarrow b}[t_1=0]=q_a,\hspace{1cm}q^{a\Rightarrow b}[t_1=1]=q_b,$$
		which is given by
		$$q^{a\Rightarrow b}(t_1,t_2)=\begin{cases}
		\top & \text{if \, $(0\leq t_1\leq 1$ and $0\leq t_2<\frac{1}{2})$ or $(0<t_1<1$ and $t_2=\frac{1}{2})$}\\
		& \hspace{0.6cm} \text{or \,$(0<t_1<1$ and $\frac{1}{2}<t_2<1),$ } \\
		a & \text{if \, $t_1=0$ and $\frac{1}{2}\leq t_2<1,$} \\
		b & \text{if \, $t_1=1$ and $\frac{1}{2}\leq t_2<1,$} \\
		\perp & \text{if \, $0\leq t_1\leq1$ and $t_2=1.$}
		\end{cases}$$
		
		If $0\leq t_2<\frac{1}{2}$ or $t_2=1$, the 1-product of the 2-paths $q^{0\Rightarrow 1}$ and $q^{1\Rightarrow 2}$ is given by 
		$$ (q^{0\Rightarrow 1}\ast_1 q^{1\Rightarrow 2})[t_2]=1_{e(t_2)}=q^{0\Rightarrow 2}[t_2],$$
		\medskip and if $\frac{1}{2}\leq t_2<1$, the 1-product results in
		\begin{align*}
		(q^{0\Rightarrow 1}\ast_1 q^{1\Rightarrow 2})[t_2]&=q^{0\Rightarrow 1}[t_2]\ast_0 q^{1\Rightarrow 2}[t_2] \\
		&=p_{\top}^{0\rightarrow 1}\ast_0 p_{\top}^{1\rightarrow 2} \\
		&=_hp_{\top}^{0\rightarrow 2} \\
		&=q^{0\Rightarrow 2}[t_2].
		\end{align*}
		Thus,
		$$q^{0\Rightarrow 1}\ast_1 q^{1\Rightarrow 2}=_hq^{0\Rightarrow 2}.$$
			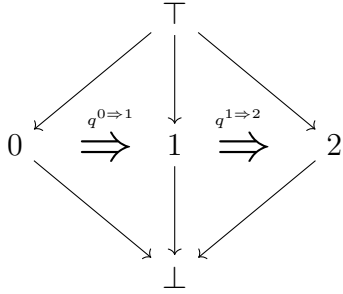
\begin{figure}[!htb]
			\centering
			\begin{tikzpicture}[scale=.7]
			\node (top) at (0,2.5) {$\top$};
			\node (0) at (-3,0) {$0$};
			\node (q1) at (-1.2,0.5) {{\tiny $q^{0\Rightarrow 1}$}};
			\node (f1) at (-1.3,-0.1) {{\LARGE $\Rightarrow$}};
		    \node (q2) at (1.2,0.5) {{\tiny $q^{1\Rightarrow 2}$}};
            \node (f1) at (1.3,-0.1) {{\LARGE $\Rightarrow$}};
            \node (1) at (0,0) {$1$};
			\node (2) at (3,0) {$2$};
			\node (bot) at (0,-2.5) {$\bot$};
			\draw [->] (top) -- (0); 
			\draw[->](top) -- (1); 
			\draw[->](top) -- (2);
			\draw[<-] (bot) -- (0);
			\draw[->] (1) -- (bot);
			\draw[->](2) --(bot);
			\end{tikzpicture}
			\caption{The product $q^{0\Rightarrow 1}\ast_1 q^{1\Rightarrow 2}$}\label{fig8}
		\end{figure}
	\end{itemize}
\end{ejem}

\begin{defin}[Parallel paths]
	Two $n$-paths $p$ and $q$ based at, resp., $a,b\in D$ are parallel if for each $r=1\ldots,n-1$ one has that
	
	\begin{enumerate}
		\item $p[t_1=0,\ldots,t_r=0]=q[t_1=0,\ldots,t_r=0]$,
		\item $p[t_1=1,\ldots,t_r=1]=q[t_1=1,\ldots,t_r=1].$
	\end{enumerate} 
\end{defin}

\begin{corol}\label{equal-paths-corollary}
	If $p$ and $q$ are parallel $n$-paths based on $a,b\in D_\infty$, then $p=_h q$.   
\end{corol}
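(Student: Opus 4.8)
The plan is to realise $p=_h q$ by exhibiting an explicit $(n+1)$-path $P$ whose leading (homotopy) coordinate runs from $p$ to $q$ while its trailing, ``based'' coordinate is held fixed, i.e.\ $P$ must send the trailing face at $0$ to $a$ and at $1$ to $b$ throughout, as Definition \ref{n-path-definition} demands. I will deliberately \emph{not} attempt to use the contraction $H$ of Theorem \ref{trivialgroupcpo-theorem} as a ``$\perp$-bridge'' joining $p$ and $q$ across a middle slice: inserting $\perp$ on an interior slice forces the value $\perp$ arbitrarily close to the trailing faces, where $P$ is required to equal $a$ or $b$; since every proper Scott-open omits $\perp$ by Lemma \ref{bottomnotin-lemma}, the preimage of such an open would fail to be open and $P$ would be discontinuous. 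The homotopy therefore has to be performed on the \emph{source} cube, where continuity is automatic and the based faces are preserved for free.

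The key steps, for $n\geq 2$, are as follows. First I single out the innermost edge $e_p(t_n):=p(0,\dots,0,t_n)$, a $1$-path from $a$ to $b$, and straighten the remaining coordinates by the reparametrisation
$$P(s,t_1,\dots,t_n):=p\bigl((1-s)t_1,\dots,(1-s)t_{n-1},t_n\bigr),$$
a composite of $p$ with a continuous affine self-map of the cube, hence continuous with no Scott-theoretic subtlety. One checks $P[s=0]=p$ and $P[s=1]=E_p$, where $E_p(t_1,\dots,t_n):=e_p(t_n)$ is the cylinder over $e_p$; and, crucially, $P(s,t_1,\dots,t_{n-1},0)=a$ and $P(s,t_1,\dots,t_{n-1},1)=b$ for every $s$, so the trailing faces are genuinely constant and $P$ is a legitimate $(n+1)$-path. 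Thus $p=_h E_p$, and by the same construction $q=_h E_q$. Second, I invoke the parallelism hypothesis in its innermost instance $r=n-1$, namely $p[t_1{=}0,\dots,t_{n-1}{=}0]=q[t_1{=}0,\dots,t_{n-1}{=}0]$, which is exactly the equality $e_p=e_q$ of edges, whence $E_p=E_q$. Transitivity of $=_h$ then yields $p=_h E_p=E_q=_h q$.

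The straightening degenerates when $n=1$: there is no interior coordinate to collapse and the parallel conditions are vacuous, so the claim reduces to showing that any two $1$-paths from $a$ to $b$ are homotopic rel $\{a,b\}$. This is the one point where contractibility is essential: by Theorem \ref{trivialgroupcpo-theorem}, $\pi_1(D_\infty,a)=\{[c^1(a)]\}$, so the loop $p\ast_0\bar q$ is homotopic to $c^1(a)$; using the unit and inverse laws of the groupoid of $1$-paths recorded just before the statement, $[p]=[p]\ast_0[\bar q]\ast_0[q]=[c^1(a)]\ast_0[q]=[q]$, i.e.\ $p=_h q$.

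I expect the main obstacle to be purely the bookkeeping of what counts as a valid $(n+1)$-path: one must confirm both the Scott-continuity of each intermediate map and the \emph{exact} constancy of its based faces, and resist the tempting but fatal $\perp$-contraction. Once the reparametrisation $P$ above is written down these verifications are routine, so the conceptual content is concentrated in the observation that the homotopy should act on the domain cube and in the identification of the parallel hypothesis at $r=n-1$ with equality of the innermost edge paths.
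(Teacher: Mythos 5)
Your $n=1$ case is correct and is essentially the paper's own argument (form the loop $p\ast_0\bar q$ at $a$, kill it with Theorem \ref{trivialgroupcpo-theorem}, cancel). The gap is in the case $n\geq 2$, which is where the content lies. The reparametrisation $P(s,t_1,\dots,t_n)=p\bigl((1-s)t_1,\dots,(1-s)t_{n-1},t_n\bigr)$ does hold the faces $t_n=0$ and $t_n=1$ at $a$ and $b$, but it does \emph{not} hold the side faces $t_i\in\{0,1\}$, $i<n$: for instance $P(s,1,t_2,\dots,t_n)=p(1-s,(1-s)t_2,\dots,t_n)$ varies with $s$. So $P$ is a homotopy of $n$-paths only in the loosest reading of Definition \ref{n-path-definition}, not a homotopy relative to the lower-dimensional faces; indeed its terminal slice $E_p$ is not even parallel to $p$ (already for $n=2$ one has $E_p[t_1=1]=p[t_1=0]\neq p[t_1=1]$ in general), so $P$ cannot be an $(n+1)$-cell of the globular set $\Pi_\infty(D_\infty,a,b)$, whose cells must connect parallel cells. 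The decisive symptom is that your $n\geq 2$ argument uses no property of $D_\infty$ at all: run it in $S^2$ with $p$ a degree-one square and $q$ the constant square at the basepoint --- these are parallel $2$-paths based at $x_0,x_0$ --- and it would yield $\pi_2(S^2)=0$. What the reparametrisation really shows is that the cube deformation-retracts onto a face, which is true in every space and cannot establish the corollary. The Scott topology has to enter; the paper's route is to observe that each slice $(p\ast_0\bar q)[t_1,\dots,t_{n-1}]$ is a loop at $a$, null-homotopic because the c.p.o.\ is contractible (Theorem \ref{trivialgroupcpo-theorem}), and to assemble these into a null-homotopy of $p\ast_0\bar q$ that fixes all faces, after which cancellation gives $p=_h q$.

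A secondary point: your reason for rejecting the ``$\perp$-bridge'' is mistaken. A map equal to $\perp$ on a closed slice and to $a$ elsewhere \emph{is} Scott-continuous: a proper open $A$ omits $\perp$ by Lemma \ref{bottomnotin-lemma}, so its preimage is just the open complement of that closed slice. This is exactly how the contraction $H$ in the proof of Theorem \ref{trivialgroupcpo-theorem} and the paths $r_i$ of Section 6 work; continuity would fail only if points sent to $\perp$ accumulated at a point sent into a proper open set. So the construction you declined is in fact the one that carries the proof.
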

\begin{proof}
	Since $p$ and $q$ are parallel, we have $$p[t_1=1,\ldots t_{n-1}=1]=q[t_1=1,\ldots,t_{n-1}=1]=\bar{q}[t_1=0,\ldots,t_{n-1}=0],$$ by definition of product
	$$(p\ast_0\bar{q})[t_1,\dots,t_{n-1}]= p[t_1,\dots,t_{n-1}]\ast_0 \bar{q}[t_1,\ldots,t_{n-1}]$$
	where $$(p\ast_0\bar{q})[t_1,\dots,t_{n-1}](0)=p[t_1,\dots,t_{n-1}](0)=a; \,\,\, p(t_1,\ldots,t_{n-1},0)=a,$$
	$$(p\ast_0\bar{q})[t_1,\dots,t_{n-1}](1)=\bar{q}[t_1,\dots,t_{n-1}](1)=a; \,\,\, q(t_1,\ldots,t_{n-1},1)=b,$$
	for each $t_1,\dots,t_{n-1}\in [0,1]$. 
	
	\medskip Then   $(p\ast_0\bar{q})[t_1,\dots,t_{n-1}]$ is a 1-path closed in $a$ for all $t_1,\dots,t_{n-1}\in [0,1]$. By Theorem \ref{trivialgroupcpo-theorem}, one has $(p\ast_0 \bar{q})[t_1,\ldots,t_{n-1}]=_hc(a)=c^n(a)[t_1,\ldots,t_{n-1}]$ for each $t_1,\ldots,t_{n-1}\in[0,1]$, where $c(a)$ is the constant path in $a$, i.e., $c(a)(t)=a$ for each $t\in[0,1]$, and $c^n(a)[t_1,\ldots,t_r]=c^{n-r}(a)$. Thus $p\ast_0 \bar{q}=_h c^n(a)$, so $p=_hq$. 
\end{proof}

\begin{nota}
	Write $\Pi_{n}(D_\infty,a,b)$ for the  (weak) $n$-groupoid of $n$-paths based at $a,b\in D_\infty$. And write $\Pi_\infty(D_\infty,a,b)$ for the globular set 
	$$\cdots\rightrightarrows^s_t \Pi_n(D_\infty,a,b)\rightrightarrows^s_t \Pi_{n-1}(D_\infty,a,b)\rightrightarrows^s_t\cdots\rightrightarrows^s_t \Pi_1(D_\infty,a,b)\rightrightarrows^s_t \Pi_0(D_\infty,a,b),$$
	where $s(p):=p[t_1=0]$ and $t(p):=p[t_1=1]$. 	
\end{nota}

\medskip Therefore, by Corollary \ref{equal-paths-corollary}, the $n$-groupoid $\Pi_{n}(D_\infty,a,b)$ is trivial for each $n\geq 0$, i.e., under the intensional equality $=_h$  there is only one parallel $n$-path which inhabits $\Pi_{n}(D_\infty,a,b)$. Since $D_\infty$ is connected by paths (by Theorem \ref{trivialgroupcpo-theorem}, $\pi_0(D_\infty,d)\cong \{ 0 \} $ for all $d\in D_\infty$), given $a',b'\in D_\infty$ it holds that $\Pi_{n}(D_\infty,a,b)\cong\Pi_{n}(D_\infty,a',b')$. Thus any $n$-groupoid $\Pi_{n}(D_\infty,a,b)$ can be written simply as $\Pi_{n}(D_\infty)$.

\medskip On the other hand, notice that if $a=b$ then the $n$-groupoid $\Pi_n(D_\infty,a,b)=\Pi_n(D_\infty,a,a)=\pi_n(D_\infty,a)$ for each $n\geq 0$. So  $\Pi_\infty(D_\infty,a,a)=\pi_\infty(D_\infty,a)$, where the $\infty$-group $\pi_\infty(D_\infty,a)$ is $\infty$-globular set which corresponds to diagram 
  $$\cdots\rightrightarrows^s_t \pi_n(D_\infty,a)\rightrightarrows^s_t \pi_{n-1}(D_\infty,a)\rightrightarrows^s_t\cdots\rightrightarrows^s_t \pi_1(D_\infty,a)\rightrightarrows^s_t \pi_0(D_\infty,a),$$
  with $s=t$.
  
\section{The $\lambda$-model $\mathfrak{D}_\infty$ and its fundamental $\infty$-groupoid}

\medskip According to Definition \ref{Setinftygrupoid_D-definition}, let $\mathfrak{D}_{\infty}$  be the $\infty$-groupoid generated by the c.p.o.\ $D_\infty$ with the Scott topology. By Proposition \ref{prop1} we have that for each $\mathfrak{d}\in\mathfrak{D}_\infty$  there is $d\in D_\infty$ such that
\begin{align*}
\hspace{2cm}\mathfrak{d}&=\langle\pi_0(D_\infty,d),\pi_1(D_\infty,d),\pi_2(D_\infty,d),\ldots\rangle \\
&=\langle \{d\},\{[c(d)]\},\{[c^2(d)]\},\ldots\rangle \\
&\cong\langle [d],[c(d)],[c^2(d)],\ldots,\rangle\in\pi_\infty(D_\infty,d) \,\,\, \text{(by Definition \ref{Setinftygrupoid_D-definition}}). 
\end{align*} 

	Therefore $\mathfrak{D}_\infty(d)\cong\pi_\infty (D_\infty,d)$ (isomorphism of groups) and each $\mathfrak{d}\in \mathfrak{D}_\infty$ can be seen as the infinity matrix 

$$ \mathfrak{d}\cong\langle d,c_d, c_{c_d},\dots\rangle:=
\left(\begin{array}{cccc}
d_0 & c_{d_0} & c_{c_{d_0}} &\cdots  \\ 
d_1 & c_{d_1} & c_{c_{d_1}} &\cdots  \\
d_2 & c_{d_2} & c_{c_{d_2}} &\cdots  \\
\vdots & \vdots & \vdots & \ddots \\
\end{array} \right) $$	

\begin{defin}[Application in $\mathfrak{D}_\infty$]\label{testenv-definition}
	For $\mathfrak{a}, \mathfrak{b}\in\mathfrak{D}_\infty$ such that $\mathfrak{a}_n=\pi_{n}(D_\infty,a)$ and $\mathfrak{b}_n=\pi_{n}(D_\infty,b)$, define the  product $\pi_{n}(D_\infty,a)$ with $\pi_{n}(D_\infty,b)$ as
	$$\pi_{n}(D_\infty,a)\bullet\pi_{n}(D_\infty,b):=\pi_{n}(D_\infty,a\bullet b),$$
	where $a\bullet b$ was defined below Definition \ref{D_infty-definition},  so the application of $\mathfrak{a}$ to $\mathfrak{b}$ in $\mathfrak{D}_\infty$ as the infinite sequence
	$$\mathfrak{a}\bullet\mathfrak{b}=\left\langle \mathfrak{a}_0\bullet \mathfrak{b}_0,\mathfrak{a}_1\bullet \mathfrak{b}_1,\mathfrak{a}_2\bullet \mathfrak{b}_2,\ldots \right\rangle=\left\langle a\bullet b,\pi_1(D_\infty,a\bullet b),\pi_2(D_\infty,a\bullet b),\ldots \right\rangle.$$
\end{defin}

\begin{teor}
\label{testenv-theorem}
$\langle\mathfrak{D}_\infty,\bullet\rangle\cong\langle D_\infty,\bullet\rangle$. So $\langle\mathfrak{D}_\infty,\bullet\rangle$ is an extensional $\lambda$-model.
\end{teor}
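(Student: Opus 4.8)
The plan is to exhibit an explicit isomorphism of applicative structures $\Phi\colon\mathfrak{D}_\infty\to D_\infty$ and then transport the extensional $\lambda$-model structure of $D_\infty$ along it. The natural candidate is the map sending a sequence to its bottom component, $\Phi(\mathfrak{d}):=\mathfrak{d}_0$. By Proposition \ref{prop1} every $\mathfrak{d}\in\mathfrak{D}_\infty$ has the form $\mathfrak{d}_n=\pi_n(D_\infty,d)$ for some $d\in D_\infty$, and since $\pi_0(D_\infty,d)=d$ we get $\Phi(\mathfrak{d})=\mathfrak{d}_0=d$. Conversely, the candidate inverse is $\Psi\colon D_\infty\to\mathfrak{D}_\infty$, $\Psi(d):=\langle\pi_0(D_\infty,d),\pi_1(D_\infty,d),\dots\rangle$, which lands in $\mathfrak{D}_\infty$ precisely by Proposition \ref{prop1}.

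First I would check that $\Phi$ is a bijection. Surjectivity is immediate from Proposition \ref{prop1}: given $d\in D_\infty$, the sequence $\Psi(d)$ satisfies $\Phi(\Psi(d))=d$. Injectivity is exactly Proposition \ref{prop37}, which asserts that $\mathfrak{a}=\mathfrak{b}$ holds if and only if $\mathfrak{a}_0=\mathfrak{b}_0$; hence $\Phi(\mathfrak{a})=\Phi(\mathfrak{b})$ forces $\mathfrak{a}=\mathfrak{b}$. Thus $\Phi$ and $\Psi$ are mutually inverse bijections.

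Next I would verify that $\Phi$ commutes with application. By the definition of application in $\mathfrak{D}_\infty$, the sequence $\mathfrak{a}\bullet\mathfrak{b}$ has bottom component $(\mathfrak{a}\bullet\mathfrak{b})_0=a\bullet b$, where $a=\mathfrak{a}_0$ and $b=\mathfrak{b}_0$. Reading off the $0$th coordinate gives $\Phi(\mathfrak{a}\bullet\mathfrak{b})=a\bullet b=\Phi(\mathfrak{a})\bullet\Phi(\mathfrak{b})$, so $\Phi$ is a homomorphism of applicative structures, and being bijective, an isomorphism $\langle\mathfrak{D}_\infty,\bullet\rangle\cong\langle D_\infty,\bullet\rangle$.

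Finally, I would transport the model structure. Since $\langle D_\infty,\bullet,\llbracket\,\rrbracket\rangle$ is an extensional $\lambda$-model and $\Phi$ is an applicative isomorphism, one defines the interpretation on $\mathfrak{D}_\infty$ by $\llbracket M\rrbracket^{\mathfrak{D}_\infty}_\rho:=\Psi\bigl(\llbracket M\rrbracket^{D_\infty}_{\Phi\circ\rho}\bigr)$; the clauses (a)--(d) of the Barendregt criterion are preserved because $\Phi(\mathfrak{a}\bullet\mathfrak{b})=\Phi(\mathfrak{a})\bullet\Phi(\mathfrak{b})$, and extensionality transfers since $\mathfrak{a}\bullet\mathfrak{x}=\mathfrak{b}\bullet\mathfrak{x}$ for all $\mathfrak{x}$ translates, via surjectivity of $\Phi$, into $\Phi(\mathfrak{a})\bullet y=\Phi(\mathfrak{b})\bullet y$ for all $y\in D_\infty$, whence $\Phi(\mathfrak{a})=\Phi(\mathfrak{b})$ and $\mathfrak{a}=\mathfrak{b}$ by injectivity. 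I expect the only genuine subtlety --- the \emph{hard part} --- to be the bookkeeping of this last transfer, namely arguing cleanly that being an extensional $\lambda$-model is invariant under isomorphism of applicative structures, rather than re-deriving a Barendregt projection $\langle F,G\rangle$ directly on $\mathfrak{D}_\infty$; everything else reduces to a direct appeal to Propositions \ref{prop1} and \ref{prop37} together with the definition of $\bullet$ on $\mathfrak{D}_\infty$.
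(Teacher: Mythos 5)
Your proposal is correct and follows essentially the same route as the paper: the paper defines the map in the opposite direction, $F\colon D_\infty\to\mathfrak{D}_\infty$ with $(F(a))_n=\pi_n(D_\infty,a)$ (your $\Psi$), and establishes that it is an applicative isomorphism using exactly Propositions \ref{prop1} and \ref{prop37} and the definition of $\bullet$ on $\mathfrak{D}_\infty$, as you do. Your closing paragraph on transporting the extensional $\lambda$-model structure along the isomorphism is more explicit than the paper, which leaves that final step implicit, but it is a correct elaboration rather than a different argument.
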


\begin{proof}
	It is enough to show that the mapping $F: \langle D_\infty,\bullet\rangle\rightarrow\langle\mathfrak{D}_\infty,\bullet\rangle$ such that $(F(a))_n=\pi_n(D_\infty,a)$ for each $n\in\mathbb{N}$, is an  isomorphism. 
\begin{align*}
\hspace{0.5cm}(F(a)\bullet F(b))_n&=(F(a))_n\bullet (F(b))_n \\
&=\pi_n(D_\infty,a)\bullet\pi_n(D_\infty,b) \\
& =\pi_n(D_\infty,a\bullet b) \\
& =\left(F(a\bullet b) \right)_n,
\end{align*}
for all $n\in\mathbb{N}$. This is $F(a)\bullet F(b)=F(a\bullet b)$.

\medskip $F$ is injective. Let $F(a)=F(b)$, by Proposition \ref{prop37} we have 
$$a=(F(a))_0=(F(b))_0=b.$$

$F$ is surjective. Let $\mathfrak{a}\in\mathfrak{D}_\infty$, then we have  $\mathfrak{a}_0=a$ for each $a\in D_\infty$. Thus $F(a)=\mathfrak{a}$.
\end{proof}

\begin{defin}[Partial order in $\mathfrak{D}_\infty$]
	For each $\mathfrak{a}$ and $\mathfrak{b}$ in $\mathfrak{D}_\infty$ define the partial order on $\mathfrak{D}_\infty$ as 
	$$\mathfrak{a}\sqsubseteq\mathfrak{b}\Longleftrightarrow a\sqsubseteq b,$$
	where $\mathfrak{a}_n=\pi_{n}(D_\infty,a)$ and $\mathfrak{b}_n=\pi_{n}(D_\infty,b)$. 
\end{defin}

\begin{rem} $\langle\mathfrak{D}_\infty,\sqsubseteq\rangle\cong\langle D_\infty,\sqsubseteq\rangle$. So $\langle\mathfrak{D}_\infty,\sqsubseteq\rangle$ is a c.p.o.\ and the induced topology by the mapping $F:\langle D_\infty,\tau_{Scott}\rangle\rightarrow\mathfrak{D}_\infty$ of the Theorem \ref{testenv-theorem}'s proof to the set $\mathfrak{D}_\infty$  is exactly the Scott topology.  
\end{rem}

\section{Interpretation of $\beta$-equality proofs in $D_\infty$}

In $\lambda$-calculus we have that two $\lambda$-terms $M$ and $N$ are $\beta$-equal, $M=_\beta N$, if there is a sequence of $\lambda$-terms $N_1$, $N_2$,\ldots,$N_n$ such that
$$(\forall i\leq n-1)(N_i\vartriangleright_{1\beta} N_{i+1} \,\,\, or \,\,\, N_{i+1}\vartriangleright_{1\beta} N_i \,\,\, or \,\,\, N_i\equiv_\alpha N_{i+1}),$$
where $N_1=M$ and $N_n=N$. Thus the equality of the theory $\lambda\beta$ can be seen as an intensional equality, in the sense that the chain   
$$M=N_0=_{\beta}N_1=_{\beta}\cdots=_{\beta}N_n=N,$$   
would be a proof $P$ of equality $M=_\beta N$, which can be interpreted in some topological model $\langle D,\bullet, \llbracket \, \rrbracket\rangle$ as a continuous path $p:\llbracket M \rrbracket\rightsquigarrow \llbracket N \rrbracket$
which passes through the intermediate points $\llbracket N_1 \rrbracket,\llbracket N_2 \rrbracket\ldots,\llbracket N_{n-1} \rrbracket$. Then we could ask ourselves if given two 1-proofs $P$ and $Q$ of equality $M=_\beta N$, in space $D$, is there a homotopy (2-path) between the paths $p:=\llbracket P \rrbracket$ and $q:=\llbracket Q \rrbracket$?. Now if we have some intensional definition (with respect to $D$) of $D$-equality between the proofs of equality $P$ and $Q$ such that its interpretation int $D$ is a homotopy from $p$ to $q$, we could ask again if for the 2-proofs $F$ and $G$ of the equality $P=_D Q$ is there a homotopy of homotopies (3-path) between the homotopies $f:=\llbracket F \rrbracket$ and $g:=\llbracket G \rrbracket$? And so on, we can continue asking with the purpose of forming from model topology $D$ an $\infty$-groupoid structure in $\lambda$-calculus.   

\medskip We have that $D_\infty$ is a topological $\lambda$-model, but its topological structure does not allow to capture relevant information about equality between higher proofs at $\lambda$-calculus, since the $\infty$-groupoid generated by $D_\infty$ is trivial. The reason is that any proof $P$ of equality $M=_\beta N$ given by the chain
$$P:M=N_0=_{\beta}N_1=_{\beta}\cdots=_{\beta}N_n=N,$$
would be interpreted by some path $p:\llbracket M \rrbracket\rightsquigarrow\llbracket N \rrbracket $ that passes through the intermediate points $\llbracket N_1 \rrbracket,\ldots,\llbracket N_{n-1} \rrbracket$, but all these points are equal in space $D_\infty$, i.e.,
$$p:\llbracket M\rrbracket=\llbracket N_0\rrbracket=\llbracket N_1\rrbracket=\cdots=\llbracket N_n\rrbracket=\llbracket N\rrbracket.$$

Therefore the interpretation of proof $P$ is some closed path $p:\llbracket M \rrbracket\rightsquigarrow \llbracket M \rrbracket$, we wrote such interpretation as $\llbracket P \rrbracket:=p$. Since $\pi_1(D_\infty,\llbracket M \rrbracket)$ is trivial by Theorem \ref{trivialgroupcpo-theorem}, then $p$ is homotopically equal to the constant path $c(\llbracket M \rrbracket)$, i.e., $p=_hc(\llbracket M \rrbracket)$. Now given any other proof 
$$Q:M=N_0'=_{\beta}N_1'=_{\beta}\cdots=_{\beta}N_n'=N,$$
of equality $M=_\beta N$, with interpretation $\llbracket Q \rrbracket=q$, by Corollary \ref{equal-paths-corollary} we have $p=_hq$, so we could assert that  $P=_{D_\infty}Q$ (see Figure \ref{fig:inter_Dinf}). Thus the class of all the proofs of any equality is trivial with respect to $D_\infty$.

\begin{figure}[ht!]
	\centering
    \fcolorbox{gray}{white}{\includegraphics[width=0.50\textwidth]{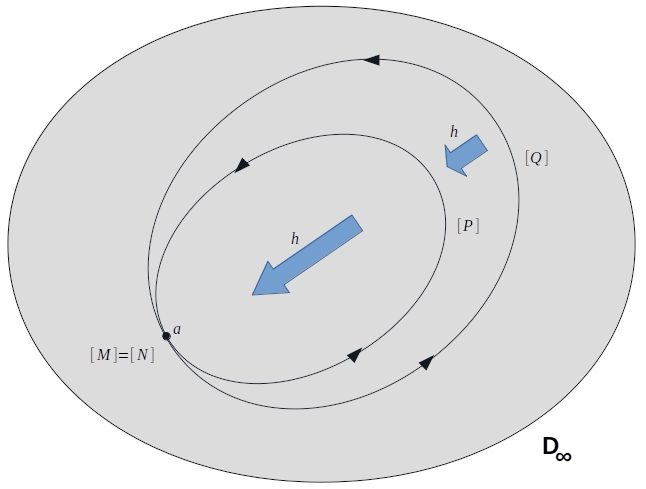}}
	\caption{Interpretation of equal proofs $P,Q:(M=_\beta N)$ on $D_\infty$.}
	\label{fig:inter_Dinf}
\end{figure}

\medskip If we continue at the next level, i.e., given any 2-proofs $F$ and $G$ of equality $P=_{D_\infty}Q$, we have that the interpretation of $F$ and $G$ is given for some pair of homotopies (2-paths) $f,g:p \rightsquigarrow q$, where $p,q:\llbracket M \rrbracket\rightsquigarrow\llbracket M \rrbracket$, by Corollary \ref{equal-paths-corollary} it has $f=_h g$ thus $F=_{D_\infty}G$. Therefore the class of all 2-proofs of any proof equality $P=_{D_\infty} Q$ is also trivial.

\medskip A better way to study the intentionality of equality $=_{\beta}$ would be to set aside the  set equality of the extensional model definition  and opt rather for homotopic models (or some homotopy variation) defined below.

\begin{defin}[Homotopic $\lambda$-model]
	A  homotopic $\lambda$-model is a triple $\langle D,\bullet, \llbracket \, \rrbracket\rangle$, where $D$ is a topological space, $\bullet:D\times D\rightarrow D$ is a binary operation and $\llbracket \, \rrbracket$ is a mapping which assigns to $\lambda$-term $M$ and each assignment $\rho:Var\rightarrow D$ an element $\llbracket M \rrbracket_{\rho}$ of $D$ such that
	\begin{enumerate}
		\item $\llbracket x \rrbracket=\rho(x);$
		\item $\llbracket PQ \rrbracket_{\rho}=_h\llbracket P\rrbracket_{\rho}\bullet\llbracket Q\rrbracket_{\rho};$
		\item $\llbracket\lambda x.P\rrbracket_{\rho}\bullet d=_{h}\llbracket P\rrbracket_{[d/x]\rho}$ for all $d\in D$;
		\item $\llbracket M\rrbracket_{\rho}=\llbracket M\rrbracket_{\sigma}$ if $\rho(x)=\sigma(x)$ for $x\in FV(M)$;
		\item $\llbracket\lambda x.M\rrbracket_{\rho}=_h\llbracket\lambda y.[y/x]M\rrbracket_{\rho}$ if $y\notin FV(M)$;
		\item if $(\forall d\in D)\left(\llbracket P\rrbracket_{[d/x]\rho}=_h\llbracket Q\rrbracket_{[d/x]\rho}\right) $, then $\llbracket\lambda x.P\rrbracket_{\rho}=_h\llbracket\lambda x.Q\rrbracket_{\rho}$.
	\end{enumerate}
	The homotopic model $\langle D,\bullet, \llbracket \, \rrbracket\rangle$ is an extensional homotopic model if it satisfies the additional property: $\llbracket\lambda x.Mx\rrbracket_{\rho}=_h\llbracket M\rrbracket_{\rho}$ with $x\notin FV(M)$.
\end{defin}

To solve the triviality problem of proofs interpretation on $D_\infty$, we would have to propose a $\lambda$-homotopic model $D$ with another topology, for which there must exist two proofs $P,Q:(M =_\beta N)$, whose interpretations are not homotopically equal, $p\neq_h q$ (of course there must also be different equality proofs whose interpretations are homotopically equal), as can be seen in the Figure \ref{fig:inter_D}. It would allow us to capture more information about the multiple $\beta$-contractions and reverse $\beta$-contractions of an equality proof than a traditional model based on extensional equality between sets.

\begin{figure}[ht!]
	\centering
    \label{fig:inter_D}
	\fcolorbox{gray}{white}{\includegraphics[width=0.60\textwidth]{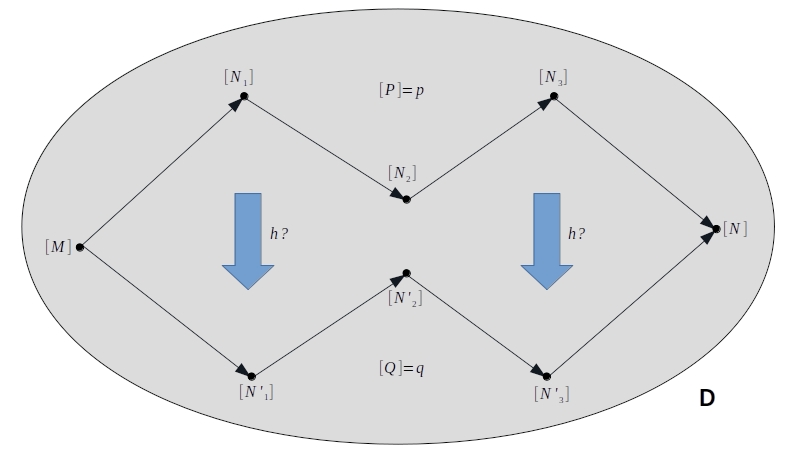}}
	\caption{Proofs $P,Q:(M=_\beta N)$ on an homotopic $\lambda$-model $D$.}
\end{figure}

\medskip In (Martínez; de Queiroz, 2020) it is put forward a cartesian closed category of $\infty$-groupoids (Kan complexes) with enough points, appropriate  for the construction of concrete \textit{homotopic $\lambda$-models} with a non-trivial structure of $\infty$-groupoid.

\medskip On other hand, if we forget the homotopies between continuous paths in $D_\infty$ (or $\mathfrak{D}_\infty$) and consider simply extensional equality between functions, we could define the interpretation of the equality proof $P:M=N_0=_{\beta}\cdots=_{\beta}N_n=N$ as a concatenation of continuous paths 
$$p:=r_1\ast r_2\ast r_3\ast\cdots\ast r_n,$$
where each continuous path $r_i:[0,1]\rightarrow D_\infty$ is given by
\begin{equation*}
	r_i(t) :=
	\begin{cases}
		a & \text{if $t\in[0,1/2)\cup(1/2,1]$,}\\
		\bot & \text{if $t=1/2$}
	\end{cases}
\end{equation*}
with $a=\llbracket M \rrbracket=\llbracket N_1 \rrbracket=\cdots=\llbracket N_n \rrbracket=\llbracket N \rrbracket$. We write the interpretation of $P$ in $D_\infty$ as $\llbracket P \rrbracket:=p$ and  each $r_i$ is called a \textit{time period}, thus we say that $p$ consists of $n$ \textit{time periods} and is written as $t(p)=n$.

\medskip Now if we have another proof of equality $	Q:M=N'_0=_{1\beta}\cdots=_{1\beta}N'_m=N$ whose interpretation would be
$$q:=r_1\ast r_2\ast r_3\ast\cdots\ast r_m,$$
where it is clear that $t(q)=m$. So we say that the equality proofs $P,Q:M=_{\beta}N$ are equal according to model $D_\infty$, noted by  if their respective interpretations are equal in the traditional sense of set theory, i.e.,
$$P=_{D_\infty}Q\Longleftrightarrow p=q,$$
thus we would have  
$$P=_{D_\infty}Q\Longleftrightarrow t(p)=t(q).$$

Thus, we have that two proofs $P$ and $Q$ of $M=_{\beta}N$  are ``equal'' if they require the same time period to complete the proof or else if  $P$ and $Q$ are sequences of the same length.

\medskip Although the extensional equality $p=q$  manages to capture information about the length of the $P$ and $Q$ proofs in $\lambda$-calculus, the nature of its extensionality  does not allow to capture more information about the 2-proofs of proof equality: $P =_{D_\infty}Q$, since there is only one canonical way to prove $P=_{D_\infty}Q$, so the generated $\infty$-groupoid by $=_{D_\infty}$ it would be trivial for 2-equality, 3-equality and so on.

\bigskip Another alternative to deal with equality of paths is offered by an approach to propositional equality which considers proofs of equality  formalized as sequences of rewrites between terms of lambda-calculus (de Queiroz; de Oliveira; Ramos 2016). Each definitional equality ($\beta$, $\eta$, $\xi$, $\mu$, $reflexivity$, $symmetry$, $transitivity$) is associated with a constant identifier, and paths are characterised as compositions of those primitive rewrites. Thus, by considering as sequences of rewrites and substitution, it comes a rather natural fact that two (or more) distinct proofs may be yet canonical and are none to be preferred over one another. By looking at proofs of equality as rewriting (or computational) paths this approach will be in line with the recently proposed connections between type theory and homotopy theory via identity types, since elements of identity types will be, concretely, paths (or homotopies). 

As a matter of fact, this is part of an ongoing project (Ramos; de Queiroz; de Oliveira, 2017, Veras et al., 2019a, Veras et al., 2019b, Veras et al., 2020), while it looks for the use of homotopy structures such as groupoids in the study of semantics of computation, it also seeks to demonstrate the utility and the impact of the so-called Curry--Howard interpretation of logical deduction in the actual practice of an important area of mathematics, namely homotopy theory. The short citation for the Royal Swedish Academy of Sciences' ``2020 Rolf Schock Prize in logic and philosophy" says that it was awarded to Per Martin-L\"of (shared with Dag Prawitz) ``for the creation of constructive type theory." In a longer statement, the prize committee recalls that constructive type theory is ``a formal language in which it is possible to express constructive mathematics" (...) ``[which] also functions as a powerful programming language and has had an enormous impact in logic, computer science and, recently, mathematics." 

In fact, by introducing a framework whose formalization of the logical notion of equality is done via the so-called ``identity type", one has the possibility for a surprising connection between term rewriting and geometric concepts such as path and homotopy. And indeed, Martin-L\"of's type theory (MLTT) allows for making useful bridges between theory of computation, algebraic topology, logic, categories, and higher algebra, and a single concept seems to serve as a bridging bond: ``path". Its impact in mathematics has been felt more strongly since the start of Vladimir Voevodsky's program on the univalent foundations of mathematics around 2005, and one specific aspect which we would like to address here is the calculation of fundamental groups of surfaces. Taking from the Wikipedia entry on ``homotopy group", calculation of homotopy groups is in general much more difficult than some of the other homotopy invariants learned in algebraic topology. Now, by using an alternative formulation of the ``identity type" which provides an explicit formal account of ``path", operationally understood as an invertible sequence of rewrites (such as Church's ``conversion"), and interpreted as a homotopy, we have provided examples of calculations of fundamental groups of surfaces such as the circle, the torus, the 2-holed torus, the Klein bottle, and the real projective plane. We would like to suggest that these examples might bear witness to the impact of MLTT in mathematics by offering formal tools to calculate and prove fundamental groups, as well as allowing to make such calculations and proofs amenable to be dealt with by systems of formal mathematics and interactive theorem provers such as Coq, Lean, and similar ones.

\section{Conclusions}

Starting from any topological space that models extensional $\lambda$-calculus, we have proposed a method to build an $\infty$-groupoid. This construction was applied to the particular c.p.o.\ $D_\infty$ with Scott topology, resulting in a constant cell infinite sequences set, where each cell sequence is isomorphic to a constant higher paths infinite matrix.

\medskip Going further, a natural way forward is to try to build  homotopic $\lambda$-model in order to avoid trivialities in the fundamental $\infty$-groupoid associated with the topology of the model, which would allow to capture relevant information about the higher equality proof in the  $\lambda$-calculus syntax.

\section*{References}

\begin{enumerate}
	\item L. Acosta and M. Rubio, Topolog\'{\i}a de Scott para relaciones de preorden, \textit{Bolet\'{\i}n de Matem\'aticas}, Nueva Serie IX No. 1 (2002), 1-10.
	\item S. Awodey and M. Warren, Homotopy theoretic models of identity types, \textit{Mathematical Proceedings of the Cambridge Philosophical Society}, v. 146, n. 1, p. 1--10, 2009.
	\item H. P. Barendregt, \textit{The Lambda Calculus, its Syntax and Semantics}, North-Holland Co., Amsternam, 1984.
	\item B. Berg and R. Garner, Types are weak $\omega$-groupoids, \textit{Proceeding of the London Mathematical Society}, v. 102, n. 2, p. 370-394, 2011. 
    \item J. M. Greenberg, \textit{Lectures on Algebraic Topology}, W. A. Benjamin, 1967.
    \item P. Goerss and J. Jardine, \textit{Simplicial Homotopy Theory}, Birkhäuser Basel, Springer Nature Switzerland AG, 2009.  
	\item A. Hatcher, \textit{Algebraic Topology}, Cambridge University Press, New York, NY, 2001.
	\item J.R. Hindley and J.P. Seldin, \textit{Lambda-Calculus and Combinators, an Introduction}, Cambridge University Press, New York, NY, 2008.
	\item M. Hofmann and T. Streicher, The groupoid model refutes uniqueness of identity proof, \textit{Logic and Computer Science}, p. 208-212, 1994. 
	\item C. Kapulkin, P. Lumsdaine, V. Voevodsky, The simplicial model of univalent foundations, \textit{ arXiv:1211.2851}, 2012.
	\item T. Leinster, \textit{Higher Operads, Higher Categories}, \textit{arXiv:math/0305049}, 2003.
	\item P. Lumsdaine, M. Shulman, Semantics of higher inductive types, \textit{Mathematical Proceedings of the Cambridge Philosophical Society} 169 (2020)
	159–208.
	\item P. Martin-L\"of, An intuitionistic theory of types: predicative part, in: \textit{Logic Colloquium '73} (Bristol, 1973).
	\item J. P. May, \textit{A Concise Course in Algebraic Topology}. University of Chicago Press, 1999. 
	\item D. Mart\'inez and R. de Queiroz, Towards a Homotopy Domain Theory, \textit{arXiv:2007.15082}, 2020.
     \item T. U. F. Program, \textit{Homotopy Type Theory: Univalent Foundations of Mathematics}. Princeton, Institute for Advanced Study, 2013. 
     \item R. J. G. B. de Queiroz, A. G. de Oliveira and A. F. Ramos. \textit{South American Journal of Logic}
Vol. 2, n. 2, pp. 245--296, 2016. (Preliminary version \textit{arXiv:1107.1901}, 2011)
\item  A. F. Ramos,  R. J. G. B.  de Queiroz, A. G.  de Oliveira. On the identity type as the type of computational paths. \textit{Logic Journal of the IGPL}, Volume 25, Issue 4, pp.\ 562--584, August 2017.
	\item D.S. Scott, A type-theoretical alternative to ISWIM, CUCH, OWHY,  \textit{Theoretical Computer Science} 121:411--440, 1993. (Informally circulated in 1969).
	\item T. M.L. Veras, A. F. Ramos,  R. J. G. B.  de Queiroz, A. G.  de Oliveira. An alternative approach to the calculation of fundamental groups based on labeled natural deduction. \textit{arXiv:1906.09107}, 2019a.
	\item T. M.L. Veras, A. F. Ramos,  R. J. G. B.  de Queiroz, A. G.  de Oliveira. A Topological Application of Labelled Natural Deduction. \textit{arXiv:1906.09105}, 2019b.
	\item T. M.L. Veras, A. F. Ramos,  R. J. G. B.  de Queiroz, T. D. O. Silva, A. G.  de Oliveira. Computational Paths -- A Weak Groupoid. \textit{arXiv:2007.07769}, 2020.
    \item V. Voevodsky, The equivalence axiom and univalent models of type theory, \textit{arXiv:1402.5556}, 2010.
\end{enumerate}

\end{document}